\documentclass[11pt,a4paper]{article}
\usepackage[T1]{fontenc}
\usepackage[utf8]{inputenc}
\usepackage{lmodern,microtype}
\usepackage[margin=25mm]{geometry}
\usepackage{amsmath,amssymb,amsthm,mathtools,bm}
\usepackage{graphicx,booktabs,authblk,cite,placeins}
\usepackage[hidelinks]{hyperref}
\graphicspath{{figures/}}
\allowdisplaybreaks[1]
\newtheorem{theorem}{Theorem}[section]
\newtheorem{lemma}[theorem]{Lemma}
\newtheorem{proposition}[theorem]{Proposition}
\newtheorem{corollary}[theorem]{Corollary}
\theoremstyle{remark}
\newtheorem{remark}[theorem]{Remark}
\newcommand{\m}{\bm m}
\newcommand{\n}{\bm n}
\newcommand{\Sop}{\widehat{\bm S}}
\newcommand{\K}{\mathcal K}
\newcommand{\T}{\mathcal T}
\newcommand{\V}{\mathcal V}
\newcommand{\Qsites}{\mathcal Q}
\newcommand{\Fsites}{\mathcal F}
\newcommand{\R}{\mathbb R}
\newcommand{\C}{\mathbb C}
\newcommand{\ket}[1]{\lvert#1\rangle}
\newcommand{\bra}[1]{\langle#1\rvert}
\newcommand{\norm}[1]{\lVert#1\rVert}
\newcommand{\abs}[1]{\lvert#1\rvert}
\DeclareMathOperator{\Tr}{Tr}
\DeclareMathOperator{\conv}{conv}
\DeclareMathOperator{\cone}{cone}
\DeclareMathOperator{\dist}{dist}
\DeclareMathOperator{\Arg}{Arg}
\newcommand{\doi}[1]{\href{https://doi.org/#1}{\nolinkurl{#1}}}
\hypersetup{pdftitle={Exact admissibility radii for the lattice topological charge
of quantum spin textures},
 pdfauthor={Raul Sanchez Galan and Robert Wieser},
 pdfkeywords={quantum skyrmions, lattice degree, convex geometry, spectral gap, verified computation}}

\title{Exact admissibility radii for the lattice topological charge
of quantum spin textures}
\author[1]{Ra\'ul S\'anchez Gal\'an}
\author[2]{Robert Wieser}
\affil[1]{\small 4i Intelligent Insights, 41092 Sevilla, Spain}
\affil[2]{\small School of Physics and Optoelectronic Engineering,
Nanjing University of Information Science and Technology,
Nanjing 210044, China}
\date{}

\begin{document}
\maketitle
\begin{abstract}
How much can a magnetic field change before a quantum spin texture
loses its skyrmion charge? For a triangulated spin-$s$ texture on $N$
quantum sites, we determine the smallest change in its local moments
that makes the reconstructed topological charge ill-defined, allowing
unequal error bounds and fixed boundary moments. Combined with spectral
perturbation theory, this geometric threshold yields explicit
magnetic-field intervals that preserve the ground-state charge, scaling
as $1/N$ when the gap and geometric margin have positive limits. We
apply the bounds to finite spin-$1/2$ systems with exchange and
Dzyaloshinskii--Moriya interactions. In a rigorously verified seven-spin
example the charge changes at a single certified field, while the
excitation gap and all local spin polarizations remain bounded away from
zero; there the geometric margin vanishes linearly.
\end{abstract}
\noindent\textbf{Keywords:} quantum skyrmions, lattice topological charge,
convex geometry, spectral perturbation, verified computation

\section{Introduction}

Quantum magnetic skyrmions can be studied through local spin moments,
correlation functions and many-body wave functions
\cite{lohani,siegl,haller2022,haller2024}. These descriptions carry
different information. In particular, normalizing local expectation
values and computing their winding produces an invariant of a
reconstructed magnetization field on a chosen mesh. The construction
does not define a topological invariant of the unrestricted quantum-state
space. Sotnikov et al.\ \cite{sotnikov} show that in a translationally
invariant cluster the local magnetization is uniform, so that the
reconstruction carries no information there, and propose the scalar
chirality, a three-spin correlation function, as a many-body diagnostic;
Salvati et al.\ \cite{salvati} find that this chirality and the spin
structure factor are largely unaffected by local projective measurements,
despite the absence of topological protection. The prescribed exterior
used below breaks translation invariance, so the reconstruction retains
content in the present geometry. Local moments can become small, and a
spherical interpolation can become singular even if each moment remains
nonzero.

The degree of Berg and L\"uscher \cite{berg} makes the reconstruction
precise on a triangulated surface. Its regularity is equivalent to the
three directions on each face lying in a common open hemisphere.
Quantum-skyrmion studies already distinguish normalized winding from
unnormalized spin diagnostics. Siegl et al.\ \cite{siegl} show that
the reconstructed charge can change without an energy-level
crossing when a local moment vanishes and reverses.
Our verified example instead keeps every local polarization bounded
away from zero and isolates a singular face interpolation.
The quantitative question is how large local-data or Hamiltonian
errors can be before the degree ceases to be certified.

Related geometric restrictions already occur in classical lattice
models. Ward \cite{ward} enforces acute angles between neighbouring
spins through a modified energy to prevent unwinding. Briani, Cicalese
and Kreutz \cite{briani} construct discrete charges and refined
interpolations for a lattice-to-continuum analysis that also handles
ambiguous configurations. The distance-to-failure viewpoint itself is
established in convex feasibility theory, beginning with Renegar's
distance to ill-posedness \cite{renegar} and including block
perturbations \cite{canovas}. Here we quantify additive errors in a
specified moment array and use that distance to certify a quantum
ground-state texture.

We give an exact answer in local-moment space. Each face contributes a
convex optimization problem whose distance from zero is the precise
radius of admissibility. The formulation applies to arbitrary spin $s$
and accommodates nonuniform error budgets and prescribed boundary moments.
Fixing the exterior matters:
allowing fictitious errors in the boundary can unnecessarily reduce a
certificate. We then combine this geometry with standard eigenvector
perturbation estimates \cite{davis} to obtain explicit magnetic-field
intervals and an a posteriori criterion for computed ground states.
Our contribution is to combine the Berg--L\"uscher degree and
Davis--Kahan-type state bounds into an explicit, computable magnetic-field
certificate, through an exact weighted distance to inadmissible moments
and lower bounds checkable in rational arithmetic. We also identify the
assumptions under which the field certificate scales as $1/N$, and show
that using the sharp Davis--Kahan $\sin2\Theta$ bound \cite{davis} in
place of the $\sin\Theta$ estimate widens every such interval by a factor
between $1$ and $3\sqrt3/4$ under the same hypotheses.

The physical example is a chiral spin-$1/2$ Heisenberg model with a
prescribed polarized environment, following the finite-flake setting
of \cite{haller2024}. Related work studies a quantum--classical
interpolation in finite chiral spin rings
\cite{SanchezGalanWieser2026}. Here all Hamiltonian coefficients are
independent of the quantum state. Seven-spin computations are verified by integer and
rational arithmetic, including the spectral ordering and the integer
degree. They establish a charge change while both the gap and all local
polarizations have positive uniform lower bounds. Full-Hilbert-space
nineteen-spin calculations provide a larger numerical illustration.
The claims concern finite systems and reconstruction on the specified
mesh; no thermodynamic phase boundary is inferred.

\section{An exact local-data radius}
\label{sec:geometry}

\subsection{Admissibility and degree}

Let $\K$ triangulate a closed oriented surface $\Sigma$, with vertex set
$\V$ and oriented triangular faces $\T$. Write
$\V=\Qsites\sqcup\Fsites$, where $\Qsites$ consists of $N$ quantum
spin-$s$ sites, $s\in\{1/2,1,3/2,\ldots\}$, and $\Fsites$ contains
prescribed classical data, if any. Set $\hbar=1$ and let $\Sop_i$ be
the spin-$s$ generators on $\C^{2s+1}$. For a density operator $\rho$,
\begin{equation}
 \m_i(\rho)=\Tr(\rho\Sop_i),\qquad
 \abs{\m_i}\leq s,\qquad p_i=\abs{\m_i}/s\in[0,1],
 \qquad i\in\Qsites.
 \label{eq:moments}
\end{equation}
The bound follows from $\norm{u\cdot\Sop_i}=s$ for every unit $u$, where $\norm\cdot$ denotes the operator norm for operators and the
Hilbert-space norm for state vectors; $\abs\cdot$ denotes the
Euclidean norm on $\R^3$.

The moments at $\Fsites$ are fixed and nonzero. Initially, all vectors
may simply be regarded as data in $\R^3$.

Call an array $\m$ \emph{admissible} if
\begin{equation}
 0\notin\conv\{\m_i,\m_j,\m_k\}
 \quad\text{for every }\tau=(i,j,k)\in\T.
 \label{eq:admissible}
\end{equation}
If $q\ne0$ is the closest point of a face hull to zero, the projection
inequality gives $q\cdot\m_r\geq\abs q^2>0$ for each vertex of that
face. Conversely, a strictly separating vector excludes zero from the
hull. Thus \eqref{eq:admissible} is equivalent to nonzero moments and
a common open hemisphere for the normalized directions
$\n_i=\m_i/\abs{\m_i}$ on each face. Related restrictions ensuring a well-defined, locally constant
topological charge occur in lattice gauge theory, notably in
L\"uscher's construction \cite{luscher1982}.
Here \eqref{eq:admissible} is necessary and sufficient for the
specific radial interpolation \eqref{eq:radial} to be well defined.
The weights introduced below specify the allowed perturbations of
the moment data.

For barycentric coordinates $\lambda_r\geq0$, $\sum_r\lambda_r=1$,
the radial interpolation
\begin{equation}
 F_{\m}\big|_\tau(\lambda)=
 \frac{\sum_{r\in\tau}\lambda_r\m_r}
      {\abs{\sum_{r\in\tau}\lambda_r\m_r}}
 \label{eq:radial}
\end{equation}
is defined exactly on the admissible set. The face maps agree on common
edges and form a continuous map $F_{\m}:\Sigma\to S^2$. With
\begin{equation}
 a_\tau=1+\n_i\cdot\n_j+\n_j\cdot\n_k+\n_k\cdot\n_i,
 \qquad b_\tau=\n_i\cdot(\n_j\times\n_k),
 \label{eq:ab}
\end{equation}
the oriented spherical area is $2\Arg(a_\tau+\mathrm i b_\tau)$,
using the principal argument. Hence \cite{berg}
\begin{equation}
 Q_\K(\m)=\frac1{2\pi}\sum_{\tau\in\T}
 \Arg(a_\tau+\mathrm i b_\tau)=\deg F_{\m}\in\mathbb Z.
 \label{eq:degree}
\end{equation}
Each face image lies in an open hemisphere, so the formula uses an
unambiguous area in $(-2\pi,2\pi)$, including the continuous value on
degenerate admissible faces. A continuous admissible path gives a
homotopy of \eqref{eq:radial}, and its degree is constant.

\subsection{Unequal errors and fixed data}

Assign weights $w_i\geq0$, at least one of which is positive. A vertex
with $w_i=0$ is fixed. In the affine space of arrays sharing these fixed
values, define
\begin{equation}
 d_w(\m,\widetilde\m)=
 \max_{i:w_i>0}\frac{\abs{\m_i-\widetilde\m_i}}{w_i}.
 \label{eq:weightednorm}
\end{equation}
For a face containing a variable vertex, set
\begin{align}
 P_\tau(\m,w)
 &=\left\{\sum_{i\in\tau}\lambda_i\m_i:
        \lambda_i\geq0,\ \sum_{i\in\tau}\lambda_iw_i=1\right\}
 \label{eq:polyhedron}\\
 &=\conv\{\m_i/w_i:i\in\tau,\ w_i>0\}
   +\cone\{\m_i:i\in\tau,\ w_i=0\}.
 \nonumber
\end{align}
This is a nonempty closed polyhedron. Define
\begin{equation}
 r_{w,\tau}(\m)=\dist(0,P_\tau(\m,w)),\qquad
 r_w(\m)=\min_{\tau\in\T}r_{w,\tau}(\m),
 \label{eq:weightedradius}
\end{equation}
assigning $r_{w,\tau}=+\infty$ to wholly fixed faces. The fixed data
are assumed compatible with an admissible array; in particular, zero
is not a convex combination of fixed vertices within any face.

\begin{theorem}[Exact weighted radius]
\label{thm:radius}
Let $\mathcal B_w$ be the inadmissible arrays with the prescribed fixed
values. Then $r_w$ is $1$-Lipschitz in $d_w$ and
\begin{equation}
 \dist_{d_w}(\m,\mathcal B_w)=r_w(\m).
 \label{eq:sharp}
\end{equation}
For admissible $\m$, if $d_w(\m,\widetilde\m)<r_w(\m)$, the straight
interpolation remains admissible and
$Q_\K(\widetilde\m)=Q_\K(\m)$. The distance equality also holds
when the original and perturbed variable moments are constrained by
$\abs{\m_i}\leq s$, the physical spin-$s$ moment balls.
\end{theorem}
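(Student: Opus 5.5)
The argument is face by face: first show that $r_{w,\tau}(\m)$ is exactly the $d_w$-distance from $\m$ to the arrays that make face $\tau$ inadmissible, then take the minimum over $\tau$. Everything rests on the following identity. For perturbations $\bm e_i$ with $\bm e_i=0$ when $w_i=0$ and $\abs{\bm e_i}\le \delta w_i$, and for $\lambda_i\ge0$ with $\sum_i\lambda_iw_i=1$, we have
\begin{equation*}
\sum_{i\in\tau}\lambda_i(\m_i+\bm e_i)=\sum_{i\in\tau}\lambda_i\m_i+\bm e,\qquad \abs{\bm e}\le\delta\sum_i\lambda_iw_i=\delta .
\end{equation*}
Face $\tau$ of $\widetilde\m$ is inadmissible exactly when some nonnegative combination of its moments, not all coefficients zero, vanishes. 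We may assume at least one variable vertex carries positive weight in that combination. Otherwise zero would be a combination of fixed vertices alone, which the compatibility assumption excludes. So the combination can be rescaled to $\sum_i\lambda_iw_i=1$.

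\emph{Lower bound.} Suppose $d_w(\m,\widetilde\m)=\delta$ and $\widetilde\m$ is inadmissible on $\tau$. The identity then gives a point $p=\sum\lambda_i\m_i\in P_\tau(\m,w)$ with $\abs p=\abs{\bm e}\le\delta$. Hence $r_{w,\tau}(\m)\le\delta$, and therefore $\dist_{d_w}(\m,\mathcal B_w)\ge r_w(\m)$.

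\emph{Upper bound.} $P_\tau$ is a closed polyhedron, so a nearest point $p=\sum\lambda_i\m_i$ with $\abs p=r_{w,\tau}$ exists. Set $\widetilde\m_i=\m_i-w_i p$. Then $\sum\lambda_i\widetilde\m_i=p-p=0$, so $\widetilde\m$ is inadmissible on $\tau$. Fixed vertices are untouched, and $d_w(\m,\widetilde\m)\le\abs p=r_{w,\tau}$. This gives equality in \eqref{eq:sharp}.

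\emph{Lipschitz property and degree invariance.} The lower-bound identity applied to $\m$ and $\widetilde\m$ in both directions shows that every point of $P_\tau(\widetilde\m,w)$ lies within $d_w(\m,\widetilde\m)$ of $P_\tau(\m,w)$, and conversely. So the Hausdorff distance between the two polyhedra is at most $d_w$. Distance to zero therefore moves by at most $d_w$, and taking the minimum over faces keeps this bound, so $r_w$ is $1$-Lipschitz. Now take the segment $\m^t=(1-t)\m+t\widetilde\m$. It satisfies $d_w(\m,\m^t)\le d_w(\m,\widetilde\m)<r_w(\m)$, so by the distance equality it avoids $\mathcal B_w$ and stays admissible. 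The radial maps \eqref{eq:radial} then vary continuously in $t$, and homotopy invariance of the degree \eqref{eq:degree} gives $Q_\K(\widetilde\m)=Q_\K(\m)$.

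\emph{Spin-ball constraint.} This is the main obstacle. The lower bound survives the constraint unchanged, since restricting the competitors can only raise the infimum. The upper-bound witness $\m_i-w_ip$, however, may leave the ball $\abs{\m_i}\le s$. The fix I would try first is to use the witness $\widetilde\m_i=\m_i-\mu_i w_ip$ with $\mu_i\in[0,1]$ chosen so the moment stays in the ball, and then check that the combination still vanishes. That check uses the projection inequality $\m_i\cdot p\ge\abs p^2$ for $\lambda_i>0$, so moving against $p$ shortens the moment, at least to first order. Alternatively, one can perturb only along the segment from $\m_i$ toward $\m_i-w_ip$ and note that convexity of the ball keeps the points inside. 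If neither works, I would fall back on shrinking the whole array radially by $\theta\in(0,1]$: admissibility is invariant under positive scaling, and the distances can then be recomputed to show that no loss occurs.
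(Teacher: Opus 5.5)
Your Lipschitz bound, the unconstrained distance equality and the interpolation argument are correct and follow the paper's proof. You use the same coefficient identity, the same shifted array $\m_i-w_ip$ on a minimizing face, and homotopy invariance along the segment. The paper obtains the lower bound from the Lipschitz estimate together with $r_w=0$ exactly on $\mathcal B_w$, whereas you argue directly; the difference is immaterial. Your remark that the lower bound survives the ball constraint is also right.

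The gap is the upper bound under that constraint. The theorem asserts it, but you leave it as a list of untested repairs, and none of them closes it as stated.
\begin{itemize}
 \item With $\widetilde\m_i=\m_i-\mu_iw_ip$ and $\mu_i\leq1$, the combination $\sum_i\lambda_i\widetilde\m_i=(1-\sum_i\lambda_i\mu_iw_i)\,p$ vanishes only if $\mu_i=1$ wherever $\lambda_i>0$. On the support of $\lambda$ you are therefore back at the original witness, and you must still show it lies in the ball.
 \item Moving along the segment toward $\m_i-w_ip$ and invoking convexity of the ball presupposes that this endpoint is already in the ball. That is exactly the point at issue.
 \item A single global factor $\theta$ applied to the whole array moves the fixed exterior moments, so you leave the affine space on which $d_w$ is defined. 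Applied to variable vertices only, it can increase $\abs{\m_i-\theta\widetilde\m_i}$ at vertices whose perturbed moment was already inside the ball, so ``no loss'' is not automatic.
\end{itemize}

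The missing observation is that no repair is needed. Take any variable vertex $i\in\tau$, not only those with $\lambda_i>0$. The coefficient vector concentrated at $i$ with $\lambda_i=1/w_i$ shows $\m_i/w_i\in P_\tau(\m,w)$. The nearest-point inequality $p\cdot x\geq\abs p^2$ on $P_\tau$ then gives $p\cdot\m_i\geq w_i\abs p^2$; your version drops the weight. Expanding exactly, not merely to first order,
\[
 \abs{\m_i-w_ip}^2=\abs{\m_i}^2-2w_i\,p\cdot\m_i+w_i^2\abs p^2
 \leq\abs{\m_i}^2-w_i^2\abs p^2\leq s^2 .
\]
So the attaining perturbation already respects $\abs{\m_i}\leq s$, and the constrained distance equals $r_w$.

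A per-vertex version of your third idea would also work. Radially projecting each perturbed variable moment onto the ball rescales that vertex by a positive factor, which preserves a vanishing nonnegative combination on $\tau$. Being the projection onto a convex set containing $\m_i$, it cannot increase $\abs{\m_i-\widetilde\m_i}$.

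The paper further notes that every vector $rn$ with $\abs n=1$ and $0\leq r\leq s$ is the moment of the mixture $\tfrac{1+r/s}{2}\ket{s,n}\bra{s,n}+\tfrac{1-r/s}{2}\ket{s,-n}\bra{s,-n}$. A product state therefore realizes the whole perturbed array, which is what makes the balls the physical constraint.
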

\begin{proof}
For coefficients satisfying \eqref{eq:polyhedron},
\begin{equation}
 \left|\sum_i\lambda_i(\widetilde\m_i-\m_i)\right|
 \leq d_w(\m,\widetilde\m)\sum_i\lambda_iw_i
 =d_w(\m,\widetilde\m).
 \label{eq:weightedchange}
\end{equation}
Taking infima and exchanging the arrays proves the Lipschitz claim.
An inadmissible face has a zero convex combination involving a variable
vertex; rescaling its coefficients makes $\sum_i\lambda_iw_i=1$.
Conversely, a zero combination in \eqref{eq:polyhedron} can be normalized
to a convex combination. Thus $r_w=0$ exactly on $\mathcal B_w$, which
proves the lower bound in \eqref{eq:sharp}.

For the reverse bound, choose a minimizing face and its closest point
$q\in P_\tau$. Replace $\m_i$ on this face by $\m_i-w_iq$, leaving
all other vertices unchanged. If $q=\sum_i\lambda_i\m_i$ with
$\sum_i\lambda_iw_i=1$, the same combination of the changed moments
vanishes. The new array is inadmissible and its distance is $\abs q$.
For each variable vertex, $\m_i/w_i\in P_\tau$, so projection gives
$q\cdot\m_i\geq w_i\abs q^2$. Consequently,
\begin{equation}
 \abs{\m_i-w_iq}^2
 \leq\abs{\m_i}^2-w_i^2\abs q^2\leq s^2
 \label{eq:blochsharp}
\end{equation}
because the original spin-$s$ moments satisfy
$\abs{\m_i}\leq s$ by \eqref{eq:moments}, and
$w_i^2\abs q^2\geq0$. The attaining perturbation respects the physical
one-site bounds. To see that these bounds characterize all attainable
first moments, let $\ket{s,n}$ be the eigenstate of $n\cdot\Sop$
with eigenvalue $s$. A vector $m=rn$, $0\leq r\leq s$, is realized by
\begin{equation}
 \rho_m=\frac{1+r/s}{2}\ket{s,n}\bra{s,n}
       +\frac{1-r/s}{2}\ket{s,-n}\bra{s,-n}.
 \label{eq:momentrealization}
\end{equation}
For $r=0$ any unit $n$ may be chosen. The tensor product of these
one-site states realizes the entire changed array. Thus the sharpness
statement holds for every spin $s$, without a pure-state restriction.

Finally, the Lipschitz bound along the straight interpolation gives
$r_w(\m(t))\geq r_w(\m)-t d_w(\m,\widetilde\m)>0$.
Homotopy invariance proves the charge statement.
\end{proof}

\begin{figure}[tbp]
 \centering
 \includegraphics[width=0.95\textwidth]{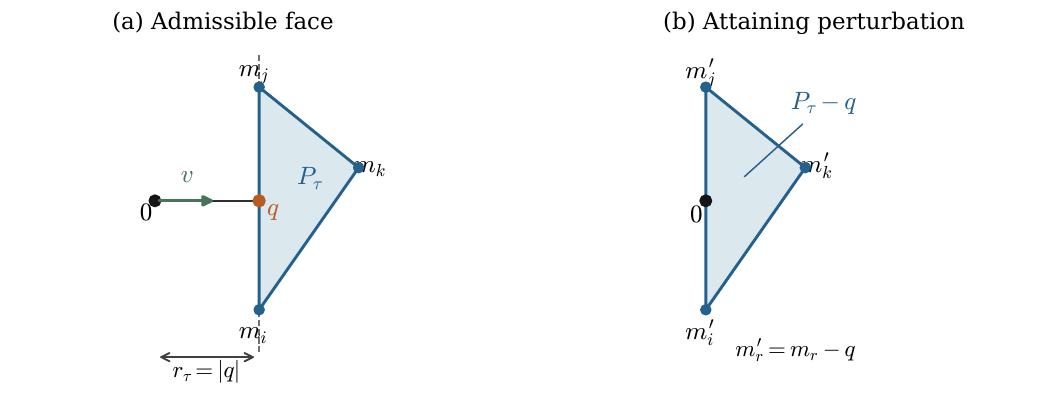}
 \caption{Planar schematic of the projection construction when all face
 weights equal one. (a) The face hull $P_\tau$ excludes the origin. Its
 closest point $q$ determines the radius $r_\tau=\abs q$ and a unit
 separating vector $v=q/\abs q$, with $v\cdot x\geq r_\tau$ for
 $x\in P_\tau$. (b) Translating all three moments by $-q$ makes the
 hull contain zero, attaining the distance to inadmissibility. The same
 construction holds in $\R^3$; unequal weights and fixed data replace
 the hull by \eqref{eq:polyhedron} and the shifts by $-w_iq$.}
 \label{fig:admissibility}
\end{figure}

If every weight is one, \eqref{eq:weightedradius} reduces to
\begin{equation}
 \gamma(\m)=\min_{\tau\in\T}
 \dist\bigl(0,\conv\{\m_i:i\in\tau\}\bigr).
 \label{eq:gamma}
\end{equation}
For quantum sites with a fixed exterior, we instead use
\begin{equation}
 w_i=1\ (i\in\Qsites),\quad w_i=0\ (i\in\Fsites),
 \qquad \Gamma(\m)=r_w(\m).
 \label{eq:Gamma}
\end{equation}
Here $\Gamma\geq\gamma$, because the allowed perturbations have been
restricted. Moreover $\Gamma\leq\min_{i\in\Qsites}\abs{\m_i}\leq s$.
The exactness in Theorem~\ref{thm:radius} concerns the distance to
\emph{inadmissible local data}; it is not an optimal distance to another
charge sector or an optimal perturbation of a prescribed pure many-body
state.
Figure~\ref{fig:admissibility} illustrates the attaining perturbation.
For classical spins constrained to retain a fixed length, the radius
still gives a sufficient ambient error bound; sharpness is not asserted
under that additional constraint.

\subsection{Computable lower bounds}

For a face containing a variable vertex, the dual form of the radius is
\begin{equation}
 r_{w,\tau}=
 \max_{\substack{\abs v\leq1\\v\cdot\m_j\geq0\ (w_j=0)}}
 \min_{i\in\tau:w_i>0}\frac{v\cdot\m_i}{w_i}.
 \label{eq:dual}
\end{equation}
This is the Euclidean-distance form of separating-hyperplane duality;
Farkas' lemma gives the underlying alternative between $0\in P_\tau$
and a feasible $v$ with positive margin \cite[sections~2.5 and~5.8]{boyd}.
Indeed, every feasible $v$ bounds the distance from below, by applying
Cauchy--Schwarz to the representation \eqref{eq:polyhedron}. For the
closest point $q\ne0$, projection onto the polyhedron gives
$q\cdot\m_j\geq0$ along each fixed-vertex cone generator and
$q\cdot\m_i/w_i\geq\abs q^2$ for each variable vertex. Thus
$v=q/\abs q$ attains the lower bound. If $q=0$, use $v=0$.

For $r_{w,\tau}>0$, the optimal separating vector is unique. Indeed,
any optimizer $v$ satisfies
\[
 r_{w,\tau}\leq v\cdot q\leq\abs v\,\abs q\leq r_{w,\tau},
\]
so equality in Cauchy--Schwarz forces $v=q/\abs q$.
At $r_{w,\tau}=0$, the maximizers instead form the set
$\{v:\abs v\leq1,\ v\cdot\m_i\geq0\text{ for all }i\in\tau\}$.
For a triangular face they are nonunique: $0\in P_\tau$ implies
linear dependence of its three moment vectors, whose common orthogonal
complement supplies nonzero maximizers as well as $v=0$.
The minimizing face and the coefficients representing $q$ need not be
unique even when its separating vector is unique.

A feasible point in $P_\tau$ gives an upper bound, while a feasible
separating vector in \eqref{eq:dual} gives a lower bound. This distinction
permits rigorous certification with rational vectors without requiring
the exact optimizer. In the polarized
exterior used below, the cone is generated by $\widehat{\bm z}$, and
the additional dual constraint is simply $v_z\geq0$.

\section{Certificates for quantum ground states}
\label{sec:spectral}

\subsection{From state errors to moment errors}

For density operators $\rho$ and $\omega$, write
$D(\rho,\omega)=\tfrac12\norm{\rho-\omega}_1$, and let $\rho_i$ and
$\omega_i$ denote their reduced states at site $i$. Duality of the
trace and operator norms, followed by contraction under partial trace,
gives
\begin{equation}
 \max_{i\in\Qsites}\abs{\m_i(\rho)-\m_i(\omega)}
 \leq 2s\max_iD(\rho_i,\omega_i)\leq 2sD(\rho,\omega).
 \label{eq:contraction}
\end{equation}
Indeed, the moment difference is the supremum of
$\abs{\Tr((\rho_i-\omega_i)u\cdot\Sop_i)}$ over unit $u$, and
$\norm{u\cdot\Sop_i}=s$. The factor $2s$ is attained by opposite
fully polarized one-site states. For $s=1/2$, the first inequality is
an equality because $\rho_i=I/2+\m_i\cdot\bm\sigma$, where $\bm\sigma$
is the vector of Pauli matrices.
Consequently, if $2sD(\rho,\omega)<\Gamma(\rho)$, their reconstructed
charges agree and
$\Gamma(\omega)\geq\Gamma(\rho)-2sD(\rho,\omega)$.
For pure states,
\begin{equation}
 D(\psi,\phi)=\sqrt{1-\abs{\langle\psi|\phi\rangle}^2}.
 \label{eq:puredistance}
\end{equation}
This provides a sufficient global-state bound from the sharp local-data
radius.

\subsection{A common spectral estimate}

Both perturbation and a posteriori certificates follow from one
spectral-residual argument.

\begin{lemma}[Spectral separation and texture stability]
\label{lem:transfer}
Let $H$ be Hermitian with a simple normalized ground state $\psi_0$, and let $\phi$
be normalized. Write $E_0(H)\leq E_1(H)\leq\cdots$ for its ordered
eigenvalues. Choose $\lambda\in\R$ and $\delta>0$ with
$\delta\leq E_1(H)-\lambda$, and set
$\eta=\norm{(H-\lambda)\phi}/\delta$. Then
\begin{equation}
 D(\psi_0,\phi)\leq\eta,\qquad
 \abs{\Gamma(\psi_0)-\Gamma(\phi)}\leq2s\eta.
 \label{eq:masterbound}
\end{equation}
If either state has geometric radius greater than $2s\eta$, both
reconstructions are admissible and their degrees agree.
\end{lemma}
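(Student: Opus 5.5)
The plan is to bound the overlap of $\phi$ with the excited subspace by a residual estimate, convert it to the pure-state trace distance \eqref{eq:puredistance}, and then transfer the bound to $\Gamma$ using the contraction \eqref{eq:contraction} and the Lipschitz property of Theorem~\ref{thm:radius}.

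\textbf{Spectral step.} Expand $\phi=\alpha\psi_0+\chi$ with $\chi\perp\psi_0$, where $\alpha=\langle\psi_0|\phi\rangle$. Since $\psi_0$ is an eigenvector of the Hermitian $H$, the operator $H-\lambda$ preserves both $\operatorname{span}\{\psi_0\}$ and its orthogonal complement. Hence
\[
 \norm{(H-\lambda)\phi}^2=\abs\alpha^2(E_0-\lambda)^2+\norm{(H-\lambda)\chi}^2\geq\norm{(H-\lambda)\chi}^2 .
\]
On the complement every eigenvalue of $H$ is at least $E_1$. Since $E_1-\lambda\geq\delta>0$, every eigenvalue of $H-\lambda$ there is at least $\delta$, so $\norm{(H-\lambda)\chi}\geq\delta\norm\chi$. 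Thus $\norm\chi\leq\eta$, and \eqref{eq:puredistance} gives $D(\psi_0,\phi)=\sqrt{1-\abs\alpha^2}=\norm\chi\leq\eta$.

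This step is the only real content, and the subtle point is the sign. We must use the one-sided fact that $E_1-\lambda\geq\delta$ on the complement, which gives the lower bound $\delta$ in absolute value. We never need $\lambda$ to lie below $E_0$ or to be close to it, because the ground-state component is simply discarded in the inequality above. No $\sin2\Theta$ refinement is needed here.

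\textbf{Texture step.} By \eqref{eq:contraction}, the quantum moments of $\psi_0$ and $\phi$ differ by at most $2sD(\psi_0,\phi)\leq2s\eta$ at every site in $\Qsites$. The fixed exterior moments are identical for both states. Therefore $d_w(\m(\psi_0),\m(\phi))\leq2s\eta$ with the weights \eqref{eq:Gamma}. Since $r_w=\Gamma$ is $1$-Lipschitz in $d_w$ by Theorem~\ref{thm:radius}, this gives $\abs{\Gamma(\psi_0)-\Gamma(\phi)}\leq2s\eta$.

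\textbf{Degree step.} Suppose one of the two states, say $\psi_0$, has $\Gamma>2s\eta$; the other case is symmetric. Then its array is admissible, because $r_w=0$ exactly on $\mathcal B_w$. Moreover $d_w<\Gamma(\psi_0)$, so the charge clause of Theorem~\ref{thm:radius} shows that the straight interpolation stays admissible, $\phi$ is admissible, and $Q_\K$ agrees for the two states.
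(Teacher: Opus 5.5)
Your proposal is correct and follows essentially the paper's own argument. Your orthogonal splitting of $(H-\lambda)\phi$ is the paper's estimate $\delta\norm{P_\perp\phi}\leq\norm{P_\perp(H-\lambda)\phi}\leq\norm{(H-\lambda)\phi}$ written out via Pythagoras, and the texture and degree steps use \eqref{eq:contraction} and the Lipschitz and homotopy clauses of Theorem~\ref{thm:radius} exactly as the paper does.
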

\begin{proof}
For $P_\perp=I-\ket{\psi_0}\bra{\psi_0}$, spectral separation gives
\begin{equation}
 \delta\norm{P_\perp\phi}
 \leq\norm{P_\perp(H-\lambda)\phi}
 \leq\norm{(H-\lambda)\phi}.
 \label{eq:projected}
\end{equation}
Now $D(\psi_0,\phi)=\norm{P_\perp\phi}$. Equation~\eqref{eq:contraction}
and the Lipschitz and homotopy statements of Theorem~\ref{thm:radius}
give the remaining claims. This is the elementary spectral-separation
mechanism underlying eigenvector-rotation bounds \cite{davis}.
\end{proof}

\subsection{Hamiltonian and residual corollaries}

\begin{corollary}[Ground-state perturbation certificate]
\label{thm:perturbation}
Let $H_0$ be Hermitian, with simple ground state $\psi_0$, gap
$\Delta_0=E_1(H_0)-E_0(H_0)>0$ and admissible moments of radius
$\Gamma_0$. For a Hermitian perturbation $V$, set $\varepsilon=\norm V$.
If $\varepsilon<\Delta_0/2$, then $H_0+V$ has a simple ground state
$\psi_V$, gap at least $\Delta_0-2\varepsilon$, and
\begin{equation}
 D(\psi_0,\psi_V)\leq\frac{\varepsilon}{\Delta_0-\varepsilon}.
 \label{eq:rotation}
\end{equation}
In particular,
\begin{equation}
 \varepsilon<\frac{\Delta_0\Gamma_0}{2s+\Gamma_0}
 \quad\Longrightarrow\quad Q_\K(\psi_V)=Q_\K(\psi_0).
 \label{eq:operatorcertificate}
\end{equation}
\end{corollary}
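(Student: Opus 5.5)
The plan is to apply Weyl's inequality to get the spectral claims, and then Lemma~\ref{lem:transfer} with $H=H_0+V$ and trial vector $\phi=\psi_0$ to get the rotation bound and the charge certificate.

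\textbf{Spectral claims.} Weyl's inequality gives $\abs{E_k(H_0+V)-E_k(H_0)}\leq\varepsilon$ for every $k$. Hence
\[
E_1(H_0+V)-E_0(H_0+V)\geq\Delta_0-2\varepsilon>0,
\]
so the ground state $\psi_V$ is simple and the gap claim holds.

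\textbf{Rotation bound.} Choose $\lambda=E_0(H_0)$ and $\delta=E_1(H_0+V)-\lambda$. By Weyl, $\delta\geq\Delta_0-\varepsilon>0$, which also satisfies the hypothesis $\delta\leq E_1(H)-\lambda$. The residual is
\[
\norm{(H_0+V-\lambda)\psi_0}=\norm{V\psi_0}\leq\varepsilon.
\]
Lemma~\ref{lem:transfer} then gives $\eta\leq\varepsilon/(\Delta_0-\varepsilon)$, and \eqref{eq:rotation} follows.

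The one point needing care is that the Lemma requires a lower bound on $E_1(H)-\lambda$ for the perturbed $H$, not for $H_0$. The naive choice $\delta=\Delta_0-2\varepsilon$ would give a worse denominator. The sharper $\Delta_0-\varepsilon$ comes from fixing $\lambda$ at the unperturbed eigenvalue and bounding $E_1(H)$ from below by $E_1(H_0)-\varepsilon$. This bound only uses the \emph{upper} level, which is why only one $\varepsilon$ is lost.

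\textbf{Charge certificate.} By the Lemma (via \eqref{eq:contraction} and Theorem~\ref{thm:radius}), the charges agree once $\Gamma_0>2s\eta$. It therefore suffices that
\[
2s\varepsilon<\Gamma_0(\Delta_0-\varepsilon),\quad\text{i.e.}\quad\varepsilon(2s+\Gamma_0)<\Delta_0\Gamma_0 .
\]
This is exactly \eqref{eq:operatorcertificate}.

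It remains to check that this threshold implies $\varepsilon<\Delta_0/2$, so that the first part applies. Since $\Gamma_0\leq s$ (noted after \eqref{eq:Gamma}), we have $\Gamma_0/(2s+\Gamma_0)\leq 1/3<1/2$, so the threshold is below $\Delta_0/2$. The statement $Q_\K(\psi_V)=Q_\K(\psi_0)$ then follows from the admissibility and degree-agreement clause of Lemma~\ref{lem:transfer}.
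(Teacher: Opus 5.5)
Your proof is correct and follows the paper's argument. You use min--max/Weyl for simplicity and the gap $\Delta_0-2\varepsilon$, then Lemma~\ref{lem:transfer} with denominator $\Delta_0-\varepsilon$, and $\Gamma_0\leq s$ to show the threshold already forces $\varepsilon<\Delta_0/2$. The only difference is that you apply the lemma to $H_0+V$ with trial vector $\psi_0$ and $\lambda=E_0(H_0)$, using $E_1(H_0+V)\geq E_1(H_0)-\varepsilon$. The paper applies it to $H_0$ with trial vector $\psi_V$ and $\lambda=E_0(H_0+V)$, using $E_0(H_0+V)\leq E_0(H_0)+\varepsilon$. These are mirror-image choices yielding the same bound.
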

\begin{proof}
Writing $E_V=E_0(H_0+V)$, the min--max principle gives the gap bound and
$E_V\leq E_0(H_0)+\varepsilon$. Apply Lemma~\ref{lem:transfer} with
$H=H_0$, $\phi=\psi_V$, $\lambda=E_V$ and
$\delta=\Delta_0-\varepsilon$, since
$\norm{(H_0-E_V)\psi_V}=\norm{V\psi_V}\leq\varepsilon$.
Condition~\eqref{eq:operatorcertificate} makes
$2s\varepsilon/(\Delta_0-\varepsilon)<\Gamma_0$; as $\Gamma_0\leq s$,
it also ensures $\varepsilon<\Delta_0/2$.
\end{proof}

\begin{corollary}[Verified residual certificate]
\label{prop:residual}
Suppose $H$ has a simple ground state $\psi_0$ and a verified bound
$b\leq E_1(H)$. For a normalized candidate $\phi$ and a real
$\lambda<b$, let $r=\norm{(H-\lambda)\phi}$ and
$\eta=r/(b-\lambda)$. If $2s\eta<\Gamma(\phi)$, then
\begin{equation}
 Q_\K(\psi_0)=Q_\K(\phi),\qquad
 \Gamma(\psi_0)\geq\Gamma(\phi)-2s\eta.
 \label{eq:aposteriori}
\end{equation}
\end{corollary}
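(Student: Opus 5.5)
The plan is to apply Lemma~\ref{lem:transfer} directly with $H$ the given Hamiltonian, the candidate $\phi$, the chosen real $\lambda<b$, and $\delta=b-\lambda$. The only hypothesis of the lemma to check is $\delta\leq E_1(H)-\lambda$. This holds because the verified bound gives $b\leq E_1(H)$, so $b-\lambda\leq E_1(H)-\lambda$. Moreover $\delta>0$ because $\lambda<b$. With this choice the lemma's $\eta=\norm{(H-\lambda)\phi}/\delta$ coincides with the $\eta=r/(b-\lambda)$ of the statement.

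The lemma then yields $D(\psi_0,\phi)\leq\eta$ and, through \eqref{eq:contraction}, a moment perturbation of size at most $2s\eta$ in the weighted distance $d_w$ of \eqref{eq:Gamma}. Two facts make this transfer legitimate. First, the weights equal one on $\Qsites$, so $d_w$ is the maximal moment change over quantum sites. Second, the exterior moments are common to both states, so the arrays share the fixed values.

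Next we assume $2s\eta<\Gamma(\phi)$. Then the final clause of Lemma~\ref{lem:transfer}, which is the homotopy statement of Theorem~\ref{thm:radius} applied with $\phi$ as the reference admissible array, shows that the moments of $\psi_0$ are admissible. It also gives $Q_\K(\psi_0)=Q_\K(\phi)$. The lower bound $\Gamma(\psi_0)\geq\Gamma(\phi)-2s\eta$ is one side of the Lipschitz estimate $\abs{\Gamma(\psi_0)-\Gamma(\phi)}\leq2s\eta$ in \eqref{eq:masterbound}.

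There is no real obstacle; the corollary is a bookkeeping instance of the lemma. The one point worth stating explicitly is that no gap lower bound for $E_0$ is needed: the simplicity of $\psi_0$ together with $b\leq E_1$ is all that the projected inequality \eqref{eq:projected} uses. This holds whether $\lambda$ lies above or below $E_0$.
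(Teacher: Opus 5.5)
Your proposal is correct and matches the paper's proof, which is the one-line application of Lemma~\ref{lem:transfer} with $\delta=b-\lambda$. Your extra checks, that $0<\delta\leq E_1(H)-\lambda$ and that the shared fixed exterior lets \eqref{eq:contraction} bound the distance $d_w$, are exactly the hypotheses that one-liner leaves implicit.
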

\begin{proof}
Apply Lemma~\ref{lem:transfer} with $\delta=b-\lambda$.
\end{proof}

A small residual alone does not identify the ground state: the lower
bound on the first excited energy is essential. In the seven-spin
example, Appendix~\ref{app:verification} obtains this bound from a complete
approximate eigenbasis, whose residual and orthogonality error are
bounded with exact arithmetic.

\subsection{Magnetic-field radius and size scaling}

For $H(B)=H(B_0)-(B-B_0)\sum_{i\in\Qsites}\widehat S_i^z$, the field
parameter $B$ has energy units and
$\norm{H(B)-H(B_0)}=Ns\abs{B-B_0}$. Thus the explicit sufficient
field radius is
\begin{equation}
 \abs{B-B_0}<R_B(B_0),\qquad
 R_B(B_0)=\frac{\Delta_0\Gamma_0}{Ns(2s+\Gamma_0)}.
 \label{eq:fieldradiusspin}
\end{equation}
For $s=1/2$ this specializes to the formula used in the computations,
\begin{equation}
 R_B(B_0)=\frac{2\Delta_0\Gamma_0}{N(1+\Gamma_0)}.
 \label{eq:fieldradius}
\end{equation}
For independently perturbed local fields,
$V=-\sum_i\delta\bm b_i\cdot\Sop_i$, the commuting one-site terms
give $\norm V=s\sum_i\abs{\delta\bm b_i}$. Equation
\eqref{eq:operatorcertificate} then certifies a total local-field error
budget. Scalar energy shifts can always be subtracted from $V$ before
using the theorem.

\begin{proposition}[Size scaling of the certificate]
\label{prop:scaling}
Consider a sequence of systems at fixed spin $s$ with simple ground
states, gaps $\Delta_N>0$ and admissible moment radii $\Gamma_N>0$.
The uniform-field certificate satisfies the exact identity
\begin{equation}
 R_{B,N}=\frac1N\frac{\Delta_N\Gamma_N}{s(2s+\Gamma_N)}.
 \label{eq:Nidentity}
\end{equation}
If $0<\Delta_-\leq\Delta_N\leq\Delta_+<\infty$ and
$0<\Gamma_-\leq\Gamma_N\leq\Gamma_+\leq s$, with bounds independent
of $N$, then
\begin{equation}
 \frac{\Delta_-\Gamma_-}{s(2s+\Gamma_-)}\frac1N
 \leq R_{B,N}\leq
 \frac{\Delta_+\Gamma_+}{s(2s+\Gamma_+)}\frac1N.
 \label{eq:Nscaling}
\end{equation}
Thus $R_{B,N}=\Theta(N^{-1})$. If, more specifically,
$\Delta_N\to\Delta_*>0$ and $\Gamma_N\to\Gamma_*>0$, then
\begin{equation}
 R_{B,N}\sim\frac{\Delta_*\Gamma_*}{s(2s+\Gamma_*)}\,N^{-1}.
 \label{eq:Nasymptotic}
\end{equation}
\end{proposition}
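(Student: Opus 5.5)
The identity \eqref{eq:Nidentity} is the definition \eqref{eq:fieldradiusspin} with $\Delta_0=\Delta_N$ and $\Gamma_0=\Gamma_N$, after pulling out the factor $1/N$. The only ingredient is that the uniform-field perturbation has norm exactly $Ns\abs{B-B_0}$. This holds because the operators $\widehat S_i^z$ commute and each has spectrum $\{-s,\dots,s\}$, so the eigenvalues of their sum range over $[-Ns,Ns]$ with the extremes attained. I would then apply Corollary~\ref{thm:perturbation} with $\varepsilon=Ns\abs{B-B_0}$.

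For the two-sided bound \eqref{eq:Nscaling}, I would introduce
\[
f(\Delta,\Gamma)=\frac{\Delta\Gamma}{s(2s+\Gamma)},
\]
so that $R_{B,N}=f(\Delta_N,\Gamma_N)/N$. The function $f$ is increasing in $\Delta>0$, since it is linear with a positive coefficient when $\Gamma>0$. It is also increasing in $\Gamma\geq0$, because
\[
\frac{\Gamma}{2s+\Gamma}=1-\frac{2s}{2s+\Gamma}
\]
is strictly increasing. Monotonicity in each variable separately, with all quantities positive, gives
\[
f(\Delta_-,\Gamma_-)\leq f(\Delta_N,\Gamma_N)\leq f(\Delta_+,\Gamma_+).
\]
Dividing by $N$ yields \eqref{eq:Nscaling}. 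Both constants are positive and finite and do not depend on $N$, so $R_{B,N}=\Theta(N^{-1})$.

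For \eqref{eq:Nasymptotic}, $f$ is continuous on $(0,\infty)\times[0,s]$, so $f(\Delta_N,\Gamma_N)\to f(\Delta_*,\Gamma_*)>0$. Hence $N R_{B,N}$ divided by $f(\Delta_*,\Gamma_*)$ tends to $1$, which is the claimed asymptotic equivalence.

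The argument is elementary throughout. The only point needing care is the monotonicity of $\Gamma\mapsto\Gamma/(2s+\Gamma)$, which holds on $[0,s]$. The hypothesis $\Gamma_+\leq s$ matches the bound $\Gamma\leq s$ already noted after \eqref{eq:Gamma}.
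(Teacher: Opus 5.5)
Your proposal is correct and follows the paper's proof essentially step for step. Both arguments use the same three ingredients: the exact norm $Ns$ of $\sum_i\widehat S_i^z$, from commuting operators attaining $\pm Ns$ on fully polarized product states; monotonicity of $\Gamma\mapsto\Gamma/(2s+\Gamma)$ for the two-sided bound; and passage to the limit in $NR_{B,N}$ for the asymptotic equivalence.
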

\begin{proof}
The commuting operators $\widehat S_i^z$ have eigenvalues in $[-s,s]$,
and their sum attains $\pm Ns$ on fully polarized product states.
Hence its norm is exactly $Ns$, giving \eqref{eq:Nidentity} from
\eqref{eq:operatorcertificate}. The function
$g\mapsto g/(2s+g)$ is increasing for $g\geq0$, since its derivative
is $2s/(2s+g)^2>0$. Substituting the uniform bounds proves
\eqref{eq:Nscaling}; taking limits in $NR_{B,N}$ proves
\eqref{eq:Nasymptotic}.
\end{proof}

\subsection{A sharper field certificate}

Both the perturbation certificate and the field radius above admit a
sharper constant under the same hypotheses.

\begin{corollary}[Ground-state rotation bound (Davis--Kahan $\sin2\Theta$)]
\label{cor:sharperradius}
Let $H_0$ have a simple ground state $\psi_0$ and gap at least
$\Delta>0$. For a Hermitian perturbation $V$ with
$\varepsilon=\norm V<\Delta/2$, its perturbed ground state $\psi_V$
is simple and obeys
\begin{equation}
 D(\psi_0,\psi_V)\leq
 \sin\!\left(\frac12\arcsin\frac{2\varepsilon}{\Delta}\right).
 \label{eq:sharprotation}
\end{equation}
If $\Gamma>0$ is the initial fixed-exterior margin of a spin-$s$
texture, put $\alpha=\Gamma/(2s)$. Its charge is preserved whenever
\begin{equation}
 \norm V<\Delta \alpha\sqrt{1-\alpha^2}.
 \label{eq:sharpnormcond}
\end{equation}
In particular, for a uniform field, the sufficient radius is
\begin{equation}
 R_B^{\sharp}=\frac{\Delta\Gamma}{2Ns^2}
       \sqrt{1-\frac{\Gamma^2}{4s^2}}.
 \label{eq:sharpradius}
\end{equation}
Relative to the original certificate, evaluated with the same gap value
or the same certified lower bound $\Delta$,
\begin{equation}
 \frac{R_B^{\sharp}}{R_B}=(1+\alpha)\sqrt{1-\alpha^2},\qquad
 1<\frac{R_B^{\sharp}}{R_B}\leq\frac{3\sqrt3}{4}.
 \label{eq:improvement}
\end{equation}
\end{corollary}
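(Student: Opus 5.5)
The plan is to use Davis--Kahan for the rank-one ground projection, parametrize the rotation by an angle $\theta$, and run a continuity argument in the perturbation strength to control that angle. Simplicity of $\psi_V$ and the gap bound $E_1(H_0+V)-E_0(H_0+V)\geq\Delta-2\varepsilon>0$ come from min--max, exactly as in Corollary~\ref{thm:perturbation}. Write $\psi_V=\cos\theta\,\psi_0+\sin\theta\,\chi$ with $\chi\perp\psi_0$ and $\theta\in[0,\pi/2]$ after a phase choice, so that $D(\psi_0,\psi_V)=\sin\theta$ by \eqref{eq:puredistance}.

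\textbf{Step 1: the $\sin2\Theta$ bound.} I would invoke the Davis--Kahan $\sin2\Theta$ theorem \cite{davis} for the spectral projector onto $\psi_0$. Its separation is measured only in the spectrum of $H_0$, and it gives $\sin 2\theta\leq 2\varepsilon/\Delta$. If a self-contained argument is wanted, I would combine the eigen-equations $H_0\psi_0=E_0\psi_0$ and $(H_0+V)\psi_V=E_V\psi_V$. Pairing the second with $\psi_0$ and with $P_\perp\psi_V$ gives two identities, and eliminating $E_V-E_0$ should yield $\tfrac{\Delta}{2}\sin2\theta\leq\norm V$. Checking this elimination is the technical core of the step.

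\textbf{Step 2: inverting the sine.} This is the main obstacle. The bound $\sin2\theta\leq x:=2\varepsilon/\Delta<1$ only implies $\theta\leq\tfrac12\arcsin x$ if $\theta\leq\pi/4$. The $\sin\Theta$ estimate $\varepsilon/(\Delta-\varepsilon)$ does not guarantee this for all $\varepsilon<\Delta/2$. I would therefore use continuity along the path $H_0+tV$, $t\in[0,1]$. The gap stays at least $\Delta-2t\varepsilon>0$, so the ground state is simple throughout and the ground projector depends continuously on $t$. Hence $\theta(t)$ is continuous with $\theta(0)=0$. Since $\sin2\theta(t)\leq 2t\varepsilon/\Delta<1$, $\theta(t)$ can never reach $\pi/4$. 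This gives \eqref{eq:sharprotation}.

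\textbf{Step 3: charge condition.} By Lemma~\ref{lem:transfer}, or directly by \eqref{eq:contraction} and Theorem~\ref{thm:radius}, charge is preserved when $2s\sin\theta<\Gamma$, i.e.\ $\sin\theta<\alpha$. Since $\Gamma\leq s$, we have $\alpha\leq 1/2$ and $\arcsin\alpha\leq\pi/6$. So $\sin(\tfrac12\arcsin x)<\alpha$ is equivalent to $x<\sin(2\arcsin\alpha)=2\alpha\sqrt{1-\alpha^2}$, which is \eqref{eq:sharpnormcond}. This also gives $\varepsilon<\Delta/2$ automatically.

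\textbf{Step 4: field radius and comparison.} Using $\norm{V}=Ns\abs{B-B_0}$ from Proposition~\ref{prop:scaling} gives $R_B^\sharp=\Delta\alpha\sqrt{1-\alpha^2}/(Ns)$, which equals \eqref{eq:sharpradius}. Rewriting \eqref{eq:fieldradiusspin} gives $R_B=\Delta\alpha/(Ns(1+\alpha))$, so the ratio is $f(\alpha)=(1+\alpha)\sqrt{1-\alpha^2}$. For the bounds, $\tfrac{d}{d\alpha}f^2=(1+\alpha)^2(2-4\alpha)\geq0$ on $(0,1/2]$, with equality only at $\alpha=1/2$. So $f$ is strictly increasing there, with $f(0)=1$ and $f(1/2)=3\sqrt3/4$, which gives \eqref{eq:improvement}.
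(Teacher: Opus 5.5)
Your proposal is correct and matches the paper's proof step for step: the paper also derives $\tfrac{\Delta}{2}\sin2\theta_t\leq t\varepsilon$ along $H_0+tV$ and uses continuity of $\theta_t$ from $0$ to keep it below $\pi/4$. It then inverts the sine, reduces charge preservation to $\sin\theta<\alpha$ with $\alpha\leq1/2$, and compares $(1+\alpha)\sqrt{1-\alpha^2}$ on $(0,1/2]$ as you do. Your Step~1 elimination of $E_V-E_0$ does go through, and it is the same as the paper's single identity $\langle\xi,(H_0+tV)\phi_t\rangle=0$ with $\xi=-\sin\theta\,\psi_0+\cos\theta\,\chi$, which gives $\sin\theta\cos\theta\,(\langle\chi,H_0\chi\rangle-E_0)\leq t\varepsilon$ directly.
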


\begin{proof}
Inequality \eqref{eq:sharprotation} is the Davis--Kahan $\sin2\Theta$
theorem \cite{davis} specialised to a simple ground state; we give a
self-contained elementary proof.
Weyl's inequality gives a positive gap along $H_0+tV$, $0\leq t\leq1$.
Write its normalized ground state, with a suitable phase, as
$\phi_t=\cos\theta_t\psi_0+\sin\theta_t\chi_t$,
where $\chi_t\perp\psi_0$, $\norm{\chi_t}=1$ and
$0\leq\theta_t\leq\pi/2$. Where $\theta_t>0$, the unit vector
$\xi_t=-\sin\theta_t\psi_0+\cos\theta_t\chi_t$ is orthogonal
to $\phi_t$. Consequently,
\[
 0=\langle\xi_t,(H_0+tV)\phi_t\rangle,
 \qquad
 \frac{\Delta}{2}\sin(2\theta_t)\leq t\varepsilon.
\]
At $\theta_t=0$ the latter inequality is automatic.
The continuous angle starts at zero. It cannot reach $\pi/4$, because
that would require $t\varepsilon\geq\Delta/2$. Thus it remains on
the small-angle branch, and
$\theta_1\leq\tfrac12\arcsin(2\varepsilon/\Delta)$.
The identity $D(\psi_0,\psi_V)=\sin\theta_1$ proves
\eqref{eq:sharprotation}.

Since $0<\Gamma\leq s$, we have $0<\alpha\leq1/2$.
The condition in \eqref{eq:sharpnormcond} makes the right-hand
side of \eqref{eq:sharprotation} strictly less than $\alpha$.
The moment change is therefore less than $2s\alpha=\Gamma$, so the exact
geometric-radius theorem preserves the degree. Substituting
$\norm V=Ns\abs{\delta B}$ proves \eqref{eq:sharpradius}.
Taking the ratio to the original formula gives
\eqref{eq:improvement}; the function
$(1+\alpha)\sqrt{1-\alpha^2}$ increases on $[0,1/2]$ from $1$ to
$3\sqrt3/4$.
\end{proof}

\begin{remark}[Sharp state estimate and verified numerical gain]
The bound \eqref{eq:sharprotation} is optimal under these norm and gap
data; optimality is due to Davis \cite[Theorem~5.1]{davis1963}, and
Seelmann \cite[Remark~2.9]{seelmann} gives the arcsine form together
with an explicit extremal pair.
This does not assert that the resulting physical-field interval is
optimal after conversion to local moments.

For the seven-spin reference at $B_0/J=0.2$ and
$h_x=h_y=0$, the exact lower bounds for $\Delta/J$ and
$\Gamma$ given in \eqref{eq:certifiedreference} imply
$R_B^\sharp/J>0.07451$, rather than the original
$R_B/J>0.05992$. The new bound is checked without evaluating a floating
square root: for the exact rational lower bounds
$\underline\Delta$ and $\underline\Gamma$, verify
\[
 \left(\frac{2\underline\Delta\,\underline\Gamma}{7}\right)^2
 \bigl(1-\underline\Gamma^2\bigr)>(0.07451)^2.
\]
The left-hand side is monotone in these lower bounds on the physical
range $0<\underline\Gamma\leq1/2$. The sufficient $1/N$ scaling of
Proposition~\ref{prop:scaling} remains unchanged under the
uniform-gap and uniform-margin assumptions used there.
\end{remark}

\subsection{Why a gap alone is insufficient}

For local commuting-projector Hamiltonians satisfying the
topological-order conditions of Bravyi, Hastings and Michalakis
\cite{bhm}, sufficiently weak local perturbations preserve an isolated
low-energy spectral band, with a perturbation-strength threshold
independent of system size. These conclusions require additional
structure beyond a spectral gap. The reconstructed degree considered
here is a function of the local moments on a chosen mesh, and the
perturbations certified above are measured in global operator norm.
A simple analytic example separates the spectral and geometric
conditions. On an oriented tetrahedral sphere, prescribe
\begin{equation}
 u_j(t)=\left(\sqrt{1-t^2}\cos\frac{2\pi j}{3},
 \sqrt{1-t^2}\sin\frac{2\pi j}{3},t\right),\quad j=0,1,2,
 \qquad u_3=-\widehat{\bm z},
 \label{eq:tetrahedron}
\end{equation}
where $-1<t<1$. The linear four-spin Hamiltonian
$H(t)=-\sum_{j=0}^3u_j(t)\cdot\Sop_j$ has a unique product ground
state, constant gap $1$, and moments $\m_j=su_j$. Orient the face
$(0,1,2)$ counterclockwise and the remaining faces consistently.
For $t>0$, zero lies inside the resulting tetrahedron, so radial
projection has degree $1$. For $t<0$, all four moments lie in the
southern hemisphere, giving degree $0$. At $t=0$, the first face hull
contains zero. Every local polarization remains exactly one. The chiral
model below realizes the same geometric obstruction in an interacting
ground state under variation of a uniform field.

\section{A physical chiral-spin Hamiltonian}
\label{sec:model}

For the numerical model and its exact certificates, specialize to $s=1/2$
and $\Sop_i=\bm\sigma_i/2$. Take a triangular lattice with primitive vectors
$a_1=a(1,0)$ and $a_2=a(1/2,\sqrt3/2)$. The quantum flake is
\begin{equation}
 \Qsites_R=\{ma_1+na_2:\max(\abs m,\abs n,\abs{m+n})\leq R\},
 \qquad N=1+3R(R+1).
 \label{eq:flake}
\end{equation}
For each oriented nearest-neighbour bond, let
$\bm D_{ij}=D\widehat{\bm z}\times\widehat{\bm e}_{ij}$, where
$\widehat{\bm e}_{ij}$ points from $i$ to $j$. Define
\begin{align}
 H(B)={}&\sum_{\langle ij\rangle\subset\Qsites_R}
 \left[-J\Sop_i\cdot\Sop_j-K\widehat S_i^z\widehat S_j^z
       +\bm D_{ij}\cdot(\Sop_i\times\Sop_j)\right]
 \nonumber\\
 &-B\sum_{i\in\Qsites_R}\widehat S_i^z
   -h_x\widehat S_0^x-h_y\widehat S_0^y+H_{\partial},
 \label{eq:H}\\
 H_{\partial}={}&
 \sum_{\substack{\langle ij\rangle\\i\in\Qsites_R,\ j\notin\Qsites_R}}
 \left[-J\Sop_i\cdot\bm c_j-K\widehat S_i^zc_j^z
       +\bm D_{ij}\cdot(\Sop_i\times\bm c_j)\right],
 \qquad \bm c_j=\tfrac12\widehat{\bm z}.
 \label{eq:boundary}
\end{align}
Every internal bond is counted once; each boundary bond is oriented
from its quantum endpoint to its classical endpoint. The index $0$
denotes the centre. Here $J>0$ is ferromagnetic exchange and $K\geq0$
is optional easy-axis exchange anisotropy. For spin-$1/2$, a single-ion
term proportional to $(\widehat S_i^z)^2$ would instead be constant.
We use $J=a=1$, $D=2$ and $K=0$ throughout the computations. The ratio
$D=2J$ matches the choice of \cite{sotnikov,salvati}, made there so that
the skyrmion size is compatible with a small cluster.

The prescribed exterior selects an upward background without adding
Hilbert-space degrees of freedom. Its use as a finite-flake model is
established in \cite{haller2024}; chiral quantum Heisenberg systems also
support skyrmion textures in larger geometries \cite{haller2022}.
Equations~\eqref{eq:H}--\eqref{eq:boundary} define a Hermitian linear
operator with externally specified coefficients. They describe an
effective spin model, without calibration to a particular material.
The quantum state is obtained by ground-state diagonalization, not by
introducing a state-dependent interpolation.

For the reconstruction, add the classical shell
$\Qsites_{R+1}\setminus\Qsites_R$ and all elementary triangles in the
resulting hexagonal disk. Fill its boundary with a cone to one additional
vertex. Every vertex on the added shell and the cap has moment
$\widehat{\bm z}/2$, so the cap map is constant. This produces a
closed oriented sphere, to which Section~\ref{sec:geometry} applies.
The certificate uses weights one on quantum vertices and zero on the
exterior and cap. The topology therefore refers to the quantum flake
together with its specified magnetic environment.

\begin{figure}[tbp]
 \centering
 \includegraphics[width=\textwidth]{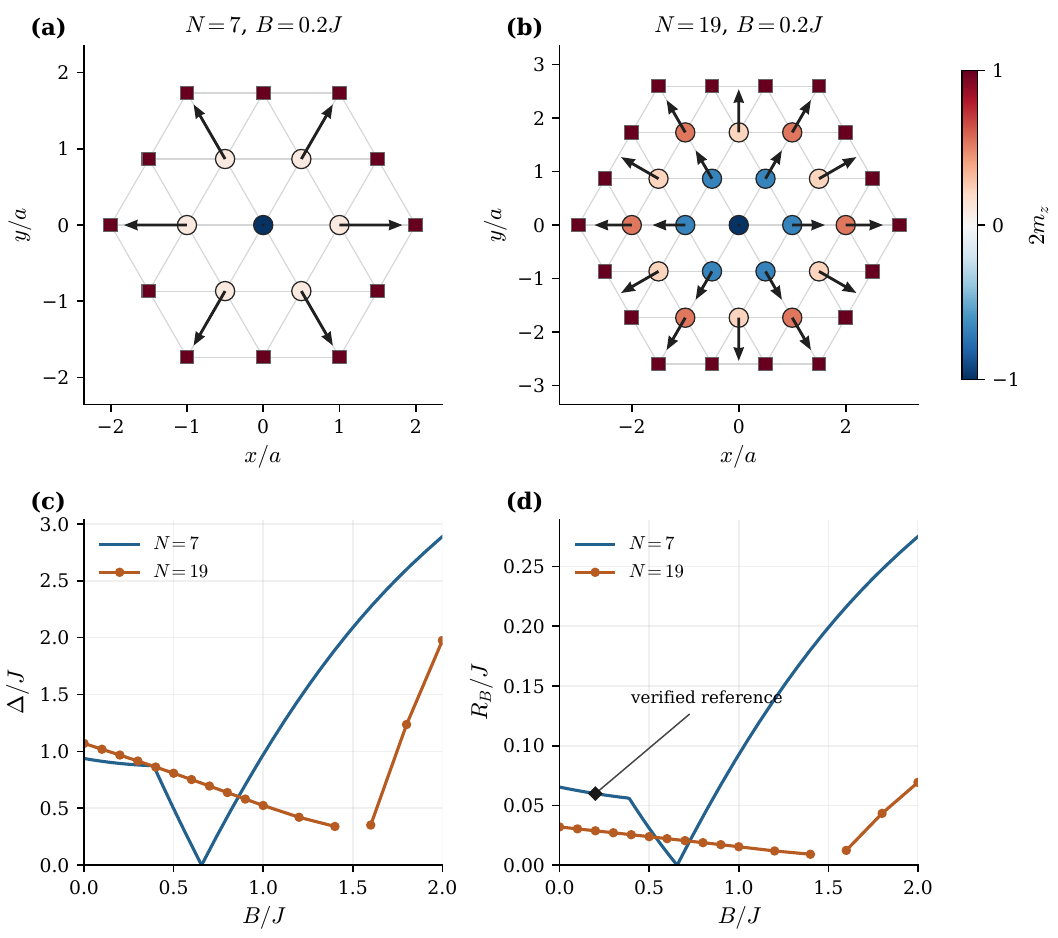}
 \caption{Physical ground-state calculations for $D=2J$, $K=0$ and
 $h_x=h_y=0$. (a),(b) Seven- and nineteen-spin textures at $B=0.2J$;
 colour denotes $2m_z$ and arrows denote in-plane moments. The outer
 ring is the prescribed polarized exterior. (c) Spectral gaps at the
 sampled fields. (d) The sufficient field radius \eqref{eq:fieldradius},
 evaluated from the computed gap and fixed-exterior margin. Lines
 connect samples and do not resolve every level crossing. The marked
 seven-spin reference has the rigorous lower bounds
 \eqref{eq:certifiedreference}; the nineteen-spin values are
 floating-point results.}
 \label{fig:flakes}
\end{figure}

\section{Verified examples and numerical results}
\label{sec:results}

\subsection{Ground-state textures and field robustness}

With $h_x=h_y=0$, both $R=1$ and $R=2$ support charge-$-1$ ground-state
textures at the sampled low fields (figure~\ref{fig:flakes}). They have
nonzero quantum correlations: a pure product state would have every
$p_i=1$, whereas the computed local polarizations are strictly smaller.
For $N=7$ and $B_0=0.2J$, the numerical values are
\begin{equation}
 E_0/J\simeq-9.749030193,\qquad
 \Delta_0/J\simeq0.894236465,
 \qquad \Gamma_0\simeq0.306394206.
 \label{eq:reference}
\end{equation}
The exact-arithmetic verification in Appendix~\ref{app:verification} proves the
conservative bounds
\begin{equation}
 Q=-1,\qquad \Delta_0>0.89423645J,\qquad
 \Gamma_0>0.30639420,
 \qquad R_B(B_0)>0.05992J.
 \label{eq:certifiedreference}
\end{equation}
The original certificate therefore preserves the charge for
$\abs{B/J-0.2}<0.05992$. Corollary~\ref{cor:sharperradius}
improves this certified interval to
\[
 \abs{B/J-0.2}<0.07451.
\]
Both statements hold throughout their respective intervals, rather
than only at sampled fields. The plotted radii in
figures~\ref{fig:flakes} and~\ref{fig:event} use the original
certificate $R_B$ of \eqref{eq:fieldradius}.

For the nineteen-spin flake, the computed values at $B=0$ are
$\Delta/J\simeq1.07049648$, $\Gamma\simeq0.39701934$ and
$R_B/J\simeq0.03202365$. At the same state,
$\gamma\simeq0.37430526$: respecting the fixed exterior improves the
local-data radius by approximately $6.1\%$. The nineteen-spin ground
state has $Q=-1$ at the sampled fields through $B=1.4J$ and is polarized
at $B=1.6J$. This is a finite-flake observation, not a thermodynamic
transition estimate. In the seven-spin system, the crossing to the
polarized state occurs numerically at $B/J\simeq0.6561825841$.
No ground-state charge is assigned at a degeneracy without
specifying a state in its eigenspace.
Independent nineteen-spin reruns with tighter tolerances reproduce the
moments and gaps to within $1.4\times10^{-14}$ and $5\times10^{-13}J$,
respectively, at four selected fields. Appendix~\ref{app:sensitivity}
also quantifies how explicit input-error budgets affect the reported
radii; these diagnostics do not replace exact spectral enclosures.
\FloatBarrier

\subsection{Charge change with a uniformly positive gap}

We now show that the reconstruction can become inadmissible even while
both the spectral gap and every local polarization remain uniformly
positive. Take $N=7$ and the prescribed transverse field
$(h_x,h_y)=(0.2J,0.07J)$. This removes the exact polarized eigenstate
and permits a continuous ground-state texture through the region of
interest. A scan with step $0.0005J$ brackets the numerical charge change between
$0.6480J$ and $0.6485J$. Proposition~\ref{prop:uniqueevent} sharpens
this to a certified bracket of width $2\times10^{-12}$ in $B/J$ containing a
single inadmissible field. We certify the convenient
interval $[0.6480J,0.6490J]$, whose midpoint $0.6485J$ supplies the
reference bounds for both halves. Thus the three fields in
table~\ref{tab:certificates} serve as two endpoints and an interval
reference, not a separate estimate of the transition width.

\begin{table}[htbp]
 \centering
 \caption{Certified quantities for $N=7$, $D=2J$, $K=0$ and
 $(h_x,h_y)=(0.2J,0.07J)$. Displayed bounds are rounded down.
 The charge is an exact signed preimage count followed by the
 residual certificate.}
 \label{tab:certificates}
 \begin{tabular}{@{}ccccc@{}}
 \toprule
 $B/J$ & $Q$ & Lower bound for $\Delta/J$ & Lower bound for $\Gamma$
        & Lower bound for $p_{\min}$\\
 \midrule
 $0.6480$ & $-1$ & $0.0712263$ & $0.00576126$ & $0.5044985$\\
 $0.6485$ & $0$  & $0.0707269$ & $0.00104577$ & $0.4931949$\\
 $0.6490$ & $0$  & $0.0702578$ & $0.00799205$ & $0.4820588$\\
 \bottomrule
 \end{tabular}
\end{table}
\FloatBarrier

\begin{proposition}[A certified geometric obstruction]
\label{prop:physicalexample}
For \eqref{eq:H}--\eqref{eq:boundary} with the preceding parameters,
the ground state is simple throughout
$I=[0.648J,0.649J]$. On that interval,
\begin{equation}
 \Delta(B)>0.0672J,\qquad
 \min_i p_i(B)>0.442.
 \label{eq:uniformbounds}
\end{equation}
Nevertheless, its moment reconstruction is inadmissible for at least
one $B\in I$, with endpoint charges $Q(0.648J)=-1$ and $Q(0.649J)=0$.
\end{proposition}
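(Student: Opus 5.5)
The argument has three parts: certified data at the three sample fields, uniform bounds on the two half-intervals from Corollary~\ref{thm:perturbation} and Lemma~\ref{lem:transfer}, and a final continuity argument.

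\textbf{Step 1 (three sample fields).} Take $B_-=0.648J$, $B_m=0.6485J$ and $B_+=0.649J$. At each field we compute a complete approximate eigenbasis of the $2^7$-dimensional Hamiltonian in rational arithmetic. Bounding its residuals and its orthogonality defect exactly gives verified lower bounds $b\le E_1$ and confirms that the ground state is simple. Corollary~\ref{prop:residual} applied to the rational ground-state candidate $\phi$ then gives $Q_\K(\psi_0)=Q_\K(\phi)$, once we check $2s\eta<\Gamma(\phi)$. For this check, $\Gamma(\phi)$ is bounded below by a rational separating vector in the dual form \eqref{eq:dual}, where the only exterior constraint is $v_z\ge0$. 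The integer $Q_\K(\phi)$ is computed as an exact signed preimage count. This yields the entries of table~\ref{tab:certificates}, in particular $Q(B_-)=-1$ and $Q(B_+)=0$, together with the bound $\abs{\m_i(\psi_0)-\m_i(\phi)}\le 2s\eta$ from \eqref{eq:contraction}.

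\textbf{Step 2 (uniform bounds on each half).} Write $B=B_m+\delta B$ with $\abs{\delta B}\le 0.0005J$. Then $V=-\delta B\sum_i\widehat S_i^z$ and $\varepsilon=\norm V=\tfrac72\abs{\delta B}\le 0.00175J$.
\begin{itemize}
\item \emph{Gap.} Weyl's inequality, i.e.\ the min--max step in Corollary~\ref{thm:perturbation}, gives $\Delta(B)\ge\Delta(B_m)-2\varepsilon\ge 0.0707269-0.0035>0.0672$. Since this is positive, the ground state is simple on all of $I$.
\item \emph{Polarizations.} Corollary~\ref{cor:sharperradius}, or \eqref{eq:rotation}, gives $D(\psi_{B_m},\psi_B)\le\varepsilon/(\Delta_m-\varepsilon)\approx0.026$. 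Hence each moment moves by at most $0.026$, so each $p_i$ drops by at most $0.052$. This gives $p_i\ge0.4931949-0.052>0.442$.
\end{itemize}
If this margin turns out too tight, we instead expand separately about each endpoint and the midpoint, so that the relevant $\abs{\delta B}\le0.00025J$.

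\textbf{Step 3 (existence of an inadmissible field).} We argue by contradiction. Suppose the reconstruction were admissible for every $B\in I$. Because the ground state is simple with a positive gap on $I$, the ground-state projector depends continuously (in fact analytically) on $B$. So the moment array $\m(B)$ is continuous, and under our assumption it is an admissible path. The homotopy invariance in Theorem~\ref{thm:radius} (the degree is constant along admissible paths) would then force $Q(B_-)=Q(B_+)$. This contradicts $-1\ne0$, so some $B\in I$ gives an inadmissible reconstruction.

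\textbf{Main obstacle.} The hard part is Step 1: obtaining rigorous exact-arithmetic enclosures of $E_1$, and making the margin $\Gamma(\phi)$ exceed $2s\eta$. At the midpoint $\Gamma$ is only about $10^{-3}$, so this needs a very small residual $\eta$. We would handle this by high-precision candidates rounded to rationals, and by choosing the Lemma~\ref{lem:transfer} parameter $\lambda$ close to the Rayleigh quotient. Strictly, only the endpoint charges are needed for Step 3, and the endpoint margins are larger. The midpoint enters only through the gap and polarization bounds, which do not require admissibility there.
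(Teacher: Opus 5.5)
Your argument is the paper's proof: Weyl/min--max from the midpoint gap certificate, \eqref{eq:rotation} combined with \eqref{eq:contraction} for the polarizations, continuity of the ground-state projector, and homotopy invariance against the certified endpoint charges. One arithmetic slip needs fixing: after rounding $\varepsilon/(\Delta_m-\varepsilon)$ up to $0.026$ you get $0.4931949-0.052=0.4412<0.442$, so your displayed inequality is false as written. Keep the unrounded drop $2(0.00175)/(0.0707269-0.00175)\approx0.05074$, which gives $p_i>0.4424$ exactly as in the paper; your three-point fallback or the $\sin2\Theta$ bound also gives enough margin.
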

\begin{proof}
The exact certificates and their verification procedure in Appendix~\ref{app:verification} establish the endpoint degrees and, at
$B_m=0.6485J$, the stronger rational lower bounds
$\Delta_m>0.0707269233J$ and $p_{\min}(B_m)>0.4931949585$.
For $B\in I$, $\norm{H(B)-H(B_m)}\leq0.00175J$.
The min--max principle therefore gives
$\Delta(B)>0.0707269233J-0.0035J>0.0672J$.
Equations~\eqref{eq:rotation} and \eqref{eq:contraction} imply
\begin{equation}
 p_i(B)\geq p_i(B_m)
 -\frac{2(0.00175)}{0.0707269233-0.00175}>0.442.
 \label{eq:polarizationinterval}
\end{equation}
At any $B_0\in I$, choose a positively oriented contour $\mathcal C$
enclosing only $E_0(B_0)$. The positive gap and eigenvalue continuity
ensure that it still isolates the ground eigenvalue for $B$ near $B_0$.
Since $H(B)$ is affine in $B$, its resolvent is analytic there, and
\[
 P(B)=\frac{1}{2\pi\mathrm i}\oint_{\mathcal C}
           (zI-H(B))^{-1}\,\mathrm dz
\]
is the continuous ground-state projector. These local formulas agree
on overlaps, giving continuity throughout $I$. Hence
$\m_i(B)=\Tr(P(B)\Sop_i)$ is continuous. If the reconstruction were
admissible everywhere, homotopy invariance would force equal endpoint
degrees, contradicting the certified values.
\end{proof}

The reconstruction obstruction identified in
Proposition~\ref{prop:physicalexample} can in fact be sharpened to an
exact statement of uniqueness, together with an explanation of its
effect on the plotted certificates in figure~\ref{fig:event}.

\begin{proposition}[Unique transverse event]
\label{prop:uniqueevent}
Consider the seven-spin Hamiltonian with $D/J=2$, $K=0$,
$(h_x/J,h_y/J)=(0.2,0.07)$ and the prescribed exterior
$\widehat{\bm z}/2$. Put $b=B/J$ and $I=[0.648,0.649]$.
The reconstruction uses the mesh of Section~\ref{sec:model}, namely the
complete elementary-triangle disk with its constant cap.
The ground state is simple on $I$, with
\[
 \Delta(b)/J>0.0672,\qquad \min_i p_i(b)>0.442.
\]
There is exactly one inadmissible reconstruction in $I$, at a field
\[
 0.648423852712<b_*<0.648423852714.
\]
The singularity occurs on the oriented face
$\tau_*=(0,1,3)$, whose axial coordinates are
$((0,0),(-1,0),(0,-1))$ in the site ordering of
Appendix~\ref{app:computation}.
At $b_*$ the three moment vectors have rank two and zero lies strictly
inside their triangle. Moreover,
\[
 \frac{d}{db}\det(\m_0(b),\m_1(b),\m_3(b))>1.747
 \quad (b\in I).
\]
Every other variable face has fixed-exterior radius greater than
$0.0155$ throughout $I$. Consequently,
\[
 Q(b)=-1\quad (0.648\leq b<b_*),\qquad
 Q(b)=0\quad (b_*<b\leq0.649).
\]
\end{proposition}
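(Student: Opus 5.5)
Simplicity of the ground state and the uniform bounds $\Delta(b)/J>0.0672$, $\min_ip_i>0.442$ are exactly the content of Proposition~\ref{prop:physicalexample}, so I would quote them. The remaining work is to control the moment curves $\m_i(b)$ on $I$ well enough to locate every inadmissible field. The plan is to split the faces into two classes. The first is the distinguished face $\tau_*=(0,1,3)$. The second is every other variable face, which I would show stays uniformly admissible. Faces containing only exterior and cap vertices are wholly fixed and have $r_{w,\tau}=+\infty$, so they never matter.

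For the other faces, I would cover $I$ by finitely many subintervals with rational reference fields $b_k$. At each $b_k$, the residual certificate (Corollary~\ref{prop:residual}) combined with an exact-arithmetic candidate vector gives enclosures of every $\m_i(b_k)$. A rational dual vector in \eqref{eq:dual}, with $v_z\ge0$ for faces touching the exterior, then gives a lower bound on each face radius. Between reference points I would use Corollary~\ref{thm:perturbation}: with $\norm{H(b)-H(b_k)}\le \tfrac72\abs{b-b_k}J$, the moments move by at most $2s\varepsilon/(\Delta-\varepsilon)$. Since each $r_{w,\tau}$ is $1$-Lipschitz in $d_w$, the radius of every face other than $\tau_*$ stays above $0.0155$ on the whole subinterval. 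The margins in Table~\ref{tab:certificates} suggest that a modest number of subintervals suffices. So any inadmissibility in $I$ must occur on $\tau_*$.

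For $\tau_*$, all three vertices are quantum vertices, and $0\in\conv\{\m_0,\m_1,\m_3\}$ forces $f(b)=\det(\m_0,\m_1,\m_3)=0$. The key step is to prove $f'(b)>1.747$ on $I$, which makes $f$ strictly increasing and hence gives at most one zero. I would get the derivative by analytic perturbation theory. Differentiating the Riesz projector in Proposition~\ref{prop:physicalexample} gives $\m_i'(b)=2\,\mathrm{Re}\langle\psi,\Sop_i R\,\widehat S^z_{\mathrm{tot}}\psi\rangle$, where $R$ is the reduced resolvent. Its norm is at most $1/\Delta$, so $\m_i'$ is bounded by explicit constants. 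The second derivative obeys a similar bound, involving $1/\Delta^2$. From these I would compute $f'$ at the reference fields in rational arithmetic and extend to each subinterval using the Lipschitz bound on $f'$ that comes from the $\m''$ estimate.

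Existence and localisation then come from sign evaluation. Certified moment enclosures at $b=0.648423852712$ and $0.648423852714$ should give $f$ of opposite signs. This needs extremely tight residuals, since the bracket has width $2\times10^{-12}$. I would therefore use high-precision rational candidate vectors, and I expect this to be the main obstacle, together with making the $f'$ bound rigorous. At $b_*$ the moments have rank two. To show that zero lies strictly inside the triangle, I would check that the barycentric coordinates of the planar dependence stay bounded away from $0$ by a certified margin. This also shows that crossing $b_*$ actually changes admissibility, rather than the hull merely touching zero at an edge or vertex.

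Finally, on each side of $b_*$ the straight path of moment data is admissible, so by homotopy invariance (Theorem~\ref{thm:radius}) $Q$ is constant on $[0.648,b_*)$ and on $(b_*,0.649]$. The certified endpoint values $Q(0.648)=-1$ and $Q(0.649)=0$ from Proposition~\ref{prop:physicalexample} then fix the two constants.
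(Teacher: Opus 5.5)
Your proposal is correct and follows the paper's overall architecture, but two sub-steps are handled differently. The shared skeleton is: quote Proposition~\ref{prop:physicalexample}; use rational separating vectors plus the perturbation bound for every face other than $\tau_*$; use a reduced-resolvent estimate $\abs{d''}\le M$ to make $d=\det(\m_0,\m_1,\m_3)$ monotone; use exact sign enclosures at the two bracket endpoints; and apply homotopy invariance on each side.

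\textbf{Monotonicity.} The paper never certifies a derivative at a point. It certifies only the values $d(b_j)$ on the grid $b_j=0.648+j/10^4$, which come directly from the moment enclosures with error $\tfrac34\eta_j$. It then uses the secant bound $d'(b)\ge(\underline d_{j+1}-\overline d_j)/h-Mh/2$. Your pointwise evaluation of $f'(b_k)$, extended by $\abs{f''}\le M$, loses the same amount. However, it requires a certified enclosure of $\m_i'(b_k)=2\,\mathrm{Re}\langle\psi,\Sop_i\mathcal R\,\widehat S^z_{\rm tot}\psi\rangle$. That means an additional error estimate for the reduced resolvent assembled from the approximate eigensystem, which you would still have to supply.

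\textbf{Interior containment at $b_*$.} The paper does no computation at $b_*$. The differing endpoint degrees force an inadmissible field, which must lie on $\tau_*$ and hence at the unique zero of $d$. Zero on an edge of $\tau_*$ would make the adjacent face fail as well, contradicting its radius $>0.0155$. A rank-one failure with nonzero moments would contain an antipodal pair, which is again an edge failure. Your direct barycentric check also works, and it yields existence of the inadmissible field without appealing to the degree change. But $b_*$ is known only to within the bracket, so you must transport the endpoint enclosures using $\abs{\m_i'}\le2sL/g$. You also need a test that is robust away from exact coplanarity, for instance certifying that $\m_0\times\m_1$, $\m_1\times\m_3$ and $\m_3\times\m_0$ are pairwise positively aligned.

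Using several reference fields for the other faces is a harmless generalization of the paper's single midpoint, where $\varepsilon=7/4000$ suffices. Note, though, that the $\Gamma$ column of Table~\ref{tab:certificates} is governed by $\tau_*$ near the event and gives no information about the margins of the other faces.
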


\begin{proof}
All derivatives in this proof are with respect to the dimensionless
field $b$, and the Hamiltonian is divided by $J$.
The midpoint spectral certificate and the endpoint degree certificates
of Proposition~\ref{prop:physicalexample} give a gap greater than
$g=0.0672$, polarization greater than $0.442$, and endpoint degrees
$-1$ and $0$. We first derive a uniform derivative bound that turns
additional point certificates into a global uniqueness statement.

Write $A=H'(b)=-\sum_i\widehat S_i^z$, so that
$L=\norm A=7/2$. Choose a normalized ground eigenvector $\psi(b)$
in horizontal gauge, $\langle\psi,\psi'\rangle=0$, and let
$P=\ket\psi\bra\psi$, $P_\perp=1-P$, and
\[
 \mathcal R=P_\perp(H-E)^{-1}P_\perp.
\]
Here the inverse is taken on the orthogonal complement of the ground
state. Simplicity and the uniform gap imply $\norm{\mathcal R}\leq1/g$.
Differentiating the eigenvalue equation gives
\[
 \psi'=-\mathcal{R}A\psi,\qquad E'=\langle\psi,A\psi\rangle,
 \qquad \norm{\psi'}\leq L/g.
\]
A second differentiation and projection give
\[
 P_\perp\psi''=-2\mathcal R P_\perp(A-E')\psi',\qquad
 P\psi''=-\norm{\psi'}^2\psi.
\]
Therefore $\norm{\psi''}\leq5L^2/g^2$.
For a spin-$s$ local moment, testing against an arbitrary unit direction
and using $\norm{u\cdot\Sop_i}=s$ yields
\[
 \abs{\m_i'}\leq\frac{2sL}{g},\qquad
 \abs{\m_i''}\leq\frac{12sL^2}{g^2}.
\]
For $d(b)=\det(\m_0,\m_1,\m_3)$, the three second-derivative
terms and six mixed first-derivative terms consequently give
\begin{equation}
 \abs{d''(b)}\leq M:=\frac{60s^3L^2}{g^2}
 =\frac{1953125}{96},\qquad s=\tfrac12.
 \label{eq:dsecond}
\end{equation}

At each grid point $b_j=0.648+j/10000$, $0\leq j\leq10$,
the exact-arithmetic verification of Appendix~\ref{app:verification} supplies
a full quantized eigensystem certifying the ground-state ordering and
a trace-distance error $\eta_j$.
The candidate moments are exact rational quadratic expressions in the
quantized ground vector. If $\widehat d_j$ is their determinant,
multilinearity and $\abs{\m_i},\abs{\widehat\m_i}\leq1/2$ imply
\begin{equation}
 \widehat d_j-\tfrac34\eta_j\leq d(b_j)
 \leq\widehat d_j+\tfrac34\eta_j.
 \label{eq:deterror}
\end{equation}
Denote these rational endpoints by $\underline d_j,\overline d_j$.
For a grid cell of length $h=10^{-4}$, integration of
\eqref{eq:dsecond} gives, at every point of the cell,
\begin{equation}
 d'(b)\geq
 \frac{\underline d_{j+1}-\overline d_j}{h}-\frac{Mh}{2}.
 \label{eq:secant}
\end{equation}
Indeed, the difference between the derivative at a point and its
average over the cell is at most
$M h^{-1}\int_{b_j}^{b_{j+1}}\abs{b-t}\,dt\leq Mh/2$.
All ten rational lower bounds exceed $1.747$. Thus $d$ is strictly
increasing on the entire interval, not merely on the sampled grid.

Two further exact certificates give the outward-rounded enclosures
\begin{align*}
 d(0.648423852712)&\in[-3.452,-3.148]\,10^{-12},\\
 d(0.648423852714)&\in[ 2.261, 2.546]\,10^{-12}.
\end{align*}
Continuity and strict monotonicity prove that $d$ has exactly one zero
in $I$, within the stated bracket.

It remains to show that this determinant zero is the only reconstruction
obstruction. At the midpoint, use the archived rational separating
vectors of Appendix~\ref{app:verification} on all variable faces except
$\tau_*$. Let $\ell_\tau$ denote the
resulting candidate lower bound on a face radius, let $\eta_m$ be the
midpoint trace-error bound, and let $\Delta_m$ be its certified gap
lower bound. The norm of the perturbation from the midpoint to any
point of $I$ is at most $\varepsilon=7/4000$. The facewise Lipschitz
bound and the spectral perturbation estimate give
\[
 r_{w,\tau}(b)\geq
 \ell_\tau-\eta_m-\frac{\varepsilon}{\Delta_m-\varepsilon}.
\]
The minimum of the right-hand sides over $\tau\ne\tau_*$ is an
exact positive rational number greater than $0.0155$.
The mesh variant on which these witnesses were computed and the
complete elementary-triangle disk have identical nonconstant face
constraints, apart from repetitions of a single one-variable
constraint, by Appendix~\ref{app:meshrepair}. The bound therefore
applies to the complete disk as well.

The different endpoint degrees force at least one inadmissible field.
All other faces are uniformly admissible, so $\tau_*$ must fail there.
Failure entails $d=0$, whose unique zero has just been proved.
No moment vanishes because of the polarization bound. If zero belonged
to an edge of $\tau_*$, the other face sharing that edge would also
fail. This contradicts its positive radius. A rank-one failed face
with nonzero moments would contain an antipodal pair, causing the same
edge failure. Hence the singular face has rank two and zero lies
strictly in its interior. Admissibility and homotopy invariance on each
side give the stated charges.

The pointwise certificates and rational sign tests used here are
verified in exact arithmetic as described in Appendix~\ref{app:verification};
floating-point eigensolvers supply candidate data only.
\end{proof}

The event's effect on figure~\ref{fig:event} follows a general local
law.

\begin{proposition}[A transverse face event produces a linear cusp]
\label{prop:cusp}
Let a $C^2$ local-moment path have exactly one singular face at $t=t_*$.
Assume that this face $\tau=(i,j,k)$ has three variable vertices of
weight one, its moments have rank two, and zero lies strictly inside
their triangle. Suppose every other face radius is positive at $t_*$,
and put
\[
 d(t)=\det(\m_i(t),\m_j(t),\m_k(t)),\qquad
 A_\tau(t)=\abs{(\m_j-\m_i)\times(\m_k-\m_i)}.
\]
If $d'(t_*)\ne0$, then in a neighborhood of $t_*$,
\begin{align}
 \Gamma(t)&=\frac{\abs{d(t)}}{A_\tau(t)}
 =\kappa\abs{t-t_*}+O((t-t_*)^2),
 &\kappa&=\frac{\abs{d'(t_*)}}{A_\tau(t_*)}>0,
 \label{eq:cusp}\\
 Q(t_*+)-Q(t_*-)&=\operatorname{sgn}d'(t_*).
 \label{eq:jump}
\end{align}
If the spectral gap is positive and Lipschitz at $t_*$, the original
field certificate satisfies
\begin{equation}
 R_B(t)=\frac{\Delta(t_*)\kappa}{2Ns^2}\abs{t-t_*}
       +O((t-t_*)^2).
 \label{eq:radiuscusp}
\end{equation}
When $t=B/J$, use $\Delta/J$ and $R_B/J$ in this last formula.
\end{proposition}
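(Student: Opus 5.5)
We will prove the three claims in order: first the local formula for $\Gamma$, then the jump \eqref{eq:jump}, then \eqref{eq:radiuscusp}.

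\emph{Local radius formula.} Since every other face radius is positive at $t_*$, Theorem~\ref{thm:radius} makes those radii $1$-Lipschitz in the moments. The path is $C^1$, so they stay bounded below near $t_*$. Hence $\Gamma(t)=r_{w,\tau}(t)$ for $t$ close to $t_*$, because $r_{w,\tau}(t)\to0$.

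With weights one, $r_{w,\tau}=\dist(0,\conv\{\m_i,\m_j,\m_k\})$. At $t_*$ the moments span a plane, and zero lies in the relative interior of the triangle. The triangle is nondegenerate, since rank two and interior containment force $A_\tau(t_*)>0$. By continuity the orthogonal projection of $0$ onto the affine plane through $\m_i,\m_j,\m_k$ stays in the relative interior of the (nondegenerate) triangle for $t$ near $t_*$. Therefore the closest point of the hull is that projection, and the distance equals the distance from $0$ to the affine plane.

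That distance is
\[
 \frac{\abs{\m_i\cdot((\m_j-\m_i)\times(\m_k-\m_i))}}{A_\tau}
 =\frac{\abs{d}}{A_\tau},
\]
because $\m_i\cdot((\m_j-\m_i)\times(\m_k-\m_i))=\det(\m_i,\m_j,\m_k)$. This step needs care: the projection point must depend continuously on $t$. It does, since it is given by a rational formula in the moments with nonvanishing denominator $A_\tau^2$.

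Taylor expansion of the $C^2$ function $d$, with $d(t_*)=0$, and of the $C^1$ function $A_\tau$, with $A_\tau(t_*)>0$, then gives $\Gamma=\kappa\abs{t-t_*}+O((t-t_*)^2)$.

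\emph{Jump formula.} Since the face $\tau$ is the only singular face, the degree is constant on each side of $t_*$. We compute the jump from the Berg--L\"uscher sum \eqref{eq:degree}. All faces other than $\tau$ stay admissible, and their terms $\Arg(a+\mathrm ib)$ vary continuously through $t_*$. So the jump equals the jump of $\frac1\pi\Arg(a_\tau+\mathrm i b_\tau)$.

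At $t_*$ the normalized directions are coplanar, and zero lies in the interior of their triangle. Writing $0=\sum\lambda_r\m_r$ with $\lambda_r>0$ and rescaling to unit vectors gives three coplanar unit vectors whose positive combination vanishes. Then $a_\tau<0$, which follows from the spherical-triangle formula: the triangle is a great-circle-bounding configuration with area $2\pi$, i.e.\ $\Arg=\pm\pi$. A more careful route is to note that $a_\tau+\mathrm ib_\tau$ cannot vanish, because its modulus is positive for nondegenerate triangles. We then check $a_\tau<0$ directly from the identity $\abs{\n_i+\n_j+\n_k}^2=3+2(a_\tau-1)$ together with the fact that pairwise angles exceed $\pi/2$ on average. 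This sub-step, showing $a_\tau(t_*)<0$, is where I expect the main difficulty.

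Near $t_*$, $b_\tau$ has the sign of $d$, because $b_\tau=d/(\abs{\m_i}\abs{\m_j}\abs{\m_k})$. So $\Arg$ jumps from $-\pi\operatorname{sgn}d'$ to $+\pi\operatorname{sgn}d'$, a change of $2\pi\operatorname{sgn}d'(t_*)$. This gives $Q(t_*+)-Q(t_*-)=\operatorname{sgn}d'(t_*)$.

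\emph{Field radius.} Substitute $\Gamma=O(\abs{t-t_*})$ and $\Delta(t)=\Delta(t_*)+O(\abs{t-t_*})$ into \eqref{eq:fieldradiusspin}. Then $\Gamma/(2s+\Gamma)=\Gamma/(2s)+O(\Gamma^2)$, which yields $R_B=\Delta(t_*)\kappa\abs{t-t_*}/(2Ns^2)+O((t-t_*)^2)$.
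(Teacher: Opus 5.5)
Your local radius formula and your expansion of the field radius match the paper's argument and are fine. The gap is the step you flagged yourself: you never prove $a_\tau(t_*)<0$, and \eqref{eq:jump} depends entirely on it. If $a_\tau(t_*)$ were positive, the principal argument would pass continuously through $0$ as $b_\tau$ changes sign, and $Q$ would not jump. None of your three suggestions supplies this sign.

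\begin{itemize}
\item The remark that the area is $2\pi$ concerns a face on which the radial interpolation is undefined, so it would need its own limiting argument.
\item Nonvanishing of $a_\tau+\mathrm i b_\tau$ is true, but not because the triangle is nondegenerate. It follows from $a_\tau^2+b_\tau^2=2(1+\n_i\cdot\n_j)(1+\n_j\cdot\n_k)(1+\n_k\cdot\n_i)$ together with the absence of antipodal pairs. Even then it gives only $a_\tau(t_*)\neq0$, not the sign.
\item The averaging argument does not work. By $\abs{\n_i+\n_j+\n_k}^2=1+2a_\tau$, you need the cosines of the pairwise angles to sum to less than $-1$. A mean angle above $\pi/2$ does not imply this: three nearly orthogonal unit vectors have $a_\tau$ close to $1$. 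Even for coplanar directions, the cosine sum tends to $-1$ as the angles approach $(\pi,\pi,0)$, so a crude averaging bound cannot give the strict inequality.
\end{itemize}

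Here is the missing step. A positive combination of the three directions vanishes, so they lie on a great circle. Their successive angular gaps satisfy $\alpha,\beta,\gamma\in(0,\pi)$ and $\alpha+\beta+\gamma=2\pi$, because a gap of $\pi$ or more would contradict zero lying strictly inside the triangle. These gaps are exactly the pairwise angles, and
\[
 a_\tau=1+\cos\alpha+\cos\beta+\cos\gamma
 =-4\cos\frac\alpha2\cos\frac\beta2\cos\frac\gamma2<0,
\]
which is the paper's step. Alternatively, you can complete your modulus route. The set of such gap triples is convex, and $a_\tau$ is continuous and nonvanishing on it, so it has constant sign. The equilateral triple gives $a_\tau=-\tfrac12$.

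Once $a_\tau<0$ holds near $t_*$, your sign analysis via $b_\tau=d/(\abs{\m_i}\abs{\m_j}\abs{\m_k})$ finishes the proof. One small correction: the jump of $Q$ is the jump of $\frac1{2\pi}\Arg(a_\tau+\mathrm i b_\tau)$, not of $\frac1\pi\Arg$. Your final formula already uses the correct factor.
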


\begin{proof}
Rank two and strict interior containment imply
$A_\tau(t_*)>0$ and strictly positive barycentric coordinates for zero.
The perpendicular projection of zero onto the affine plane of the
triangle depends continuously on its three vertices. Its barycentric
coordinates therefore remain positive near $t_*$, so it remains the
closest point in the triangle, not merely in its affine plane.
The distance-to-plane formula gives
$r_{w,\tau}(t)=\abs{d(t)}/A_\tau(t)$.
Every other face radius is bounded away from zero nearby, while this
one tends to zero. Thus it equals the global radius $\Gamma$ locally.
Taylor's theorem for $d$ and continuity with a first-order bound for
$A_\tau$ prove \eqref{eq:cusp} with the stated remainder.

For the degree jump, normalize the three moments. At the singularity
their directions lie on a great circle and surround its origin without
an antipodal pair. The three successive angular gaps
$\alpha,\beta,\gamma$ lie in $(0,\pi)$ and sum to $2\pi$. Therefore
\[
 a_\tau=1+\cos\alpha+\cos\beta+\cos\gamma
 =-4\cos(\alpha/2)\cos(\beta/2)\cos(\gamma/2)<0.
\]
Also $b_\tau=d/(\abs{\m_i}\abs{\m_j}\abs{\m_k})$.
The principal argument of $a_\tau+\mathrm i b_\tau$ jumps by
$2\pi\operatorname{sgn}d'(t_*)$. The other face areas are continuous,
which proves \eqref{eq:jump} with the manuscript's orientation
convention. Finally, substitute \eqref{eq:cusp} into
$R_B=\Delta\Gamma/[Ns(2s+\Gamma)]$.
A Lipschitz gap is sufficient for the $O((t-t_*)^2)$ remainder;
simplicity of the first excited eigenvalue is not required.
\end{proof}

\begin{remark}[Positive unequal weights]
The same geometric cusp proof holds when the active face has constant
weights $w_i,w_j,w_k>0$: replace its moments by
$\m_i/w_i,\m_j/w_j,\m_k/w_k$ in the determinant and affine-area
formulas. Positive rescaling preserves the normalized directions and
the sign of the degree jump. The other weighted face radii must remain
positive. This statement does not assert the same formula for a face
with a zero weight. The event in
Proposition~\ref{prop:uniqueevent} occurs on a fully quantum face, so
this weight-one case applies directly.
\end{remark}

At the numerical event location, an independent full-eigenbasis
calculation gives
\[
\begin{aligned}
 b_*&\approx0.6484238527131568,
 & d'(b_*)&\approx2.8523372739,\\
 A_{\tau_*}(b_*)&\approx0.2080139046,
 & \kappa&\approx13.71224332.
\end{aligned}
\]
so that $R_B/J\approx0.2773831941\,\abs{b-b_*}$ to leading order near
the event. The event bracket, uniqueness, and transversality are
rigorously certified in Proposition~\ref{prop:uniqueevent}, and the
linear asymptotics follow from Proposition~\ref{prop:cusp}. The
displayed decimal values of $d'(b_*)$, $A_{\tau_*}(b_*)$, $\kappa$, the field-radius coefficient and the chirality values below are numerical estimates without certified error bounds.

The reconstruction certified above uses only the local moments. It is
natural to compare it with the quantum scalar chirality
\begin{equation}
 \widehat\chi_\tau=\Sop_i\cdot(\Sop_j\times\Sop_k),
 \qquad \tau=(i,j,k),
 \label{eq:chirality}
\end{equation}
proposed in \cite{sotnikov} as a many-body diagnostic and studied under
local projective measurements in \cite{salvati}. On the active face,
$\det(\m_0,\m_1,\m_3)$ is the fully decoupled counterpart of
$\langle\widehat\chi_{\tau_*}\rangle$, obtained by replacing each spin
operator by its expectation value. At $b_*$ the determinant vanishes by
construction, whereas $\langle\widehat\chi_{\tau_*}\rangle\approx-0.06500$
with $d\langle\widehat\chi_{\tau_*}\rangle/db\approx1.977$, comparable to
$d'(b_*)$. Over the sampled interval $B/J\in[0.60,0.70]$ the correlator
ranges over $[-0.0979,-0.0049]$ and passes through the event without a
feature, while the ratio
$\det(\m_0,\m_1,\m_3)/\langle\widehat\chi_{\tau_*}\rangle$ falls from
$0.85$ at $B/J=0.60$ to zero at $b_*$ and changes sign beyond it. For
$h_x=h_y=0$ the same correlator drifts only from $-0.1125$ at $B/J=0.2$
to $-0.0989$ at the polarized crossing, and vanishes identically above
it. These are bare per-face expectation values, not the normalized
chirality of \cite{sotnikov}.

\begin{figure}[tbp]
 \centering
 \includegraphics[width=\textwidth]{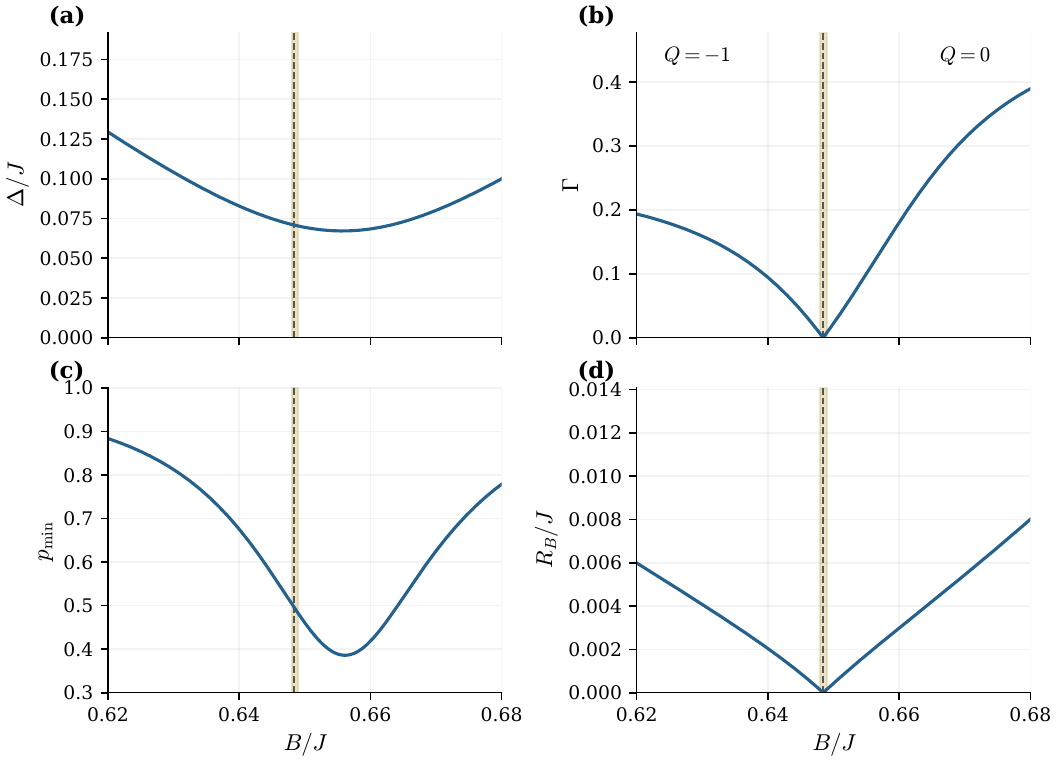}
 \caption{Seven-spin ground states with transverse field
 $(h_x,h_y)=(0.2J,0.07J)$. (a) Spectral gap. (b) Fixed-exterior
 geometric radius. (c) Minimum local polarization. (d) Sufficient
 field radius $R_B$ from \eqref{eq:fieldradius}. The vertical line marks the numerically located
 geometric singularity; the shaded interval
 $[0.648J,0.649J]$ has the rigorous bounds \eqref{eq:uniformbounds}.
 The endpoint charges on this interval are certified to differ.
 For display, the exact zero of $\Gamma$ and $R_B$ at the unique
 event is placed at its numerical location within the certified
 bracket of Proposition~\ref{prop:uniqueevent}; this inserted zero
 is not a floating-point evaluation of the face radius.
 By Proposition~\ref{prop:cusp}, the active-face projection produces a
 linear cusp in $\Gamma$ and in the sufficient field radius, although
 the ground state remains spectrally isolated throughout.}
 \label{fig:event}
\end{figure}
\FloatBarrier

\section{Discussion}

The weighted polyhedral radius is the exact distance to inadmissibility
under the specified local error model. For fixed quantum and classical
regions, it distinguishes uncertainty in measured or computed quantum
moments from prescribed exterior data. Its combination with a spectral
gap gives a sufficient field interval, and the residual formulation
allows an approximate eigenstate to be used without assuming that its
computed moments are exact.

The geometric and spectral conditions have distinct roles. A positive
gap controls the variation of a ground-state vector under a small
operator perturbation. A positive geometric radius ensures that this
variation stays within a single admissible reconstruction sector.
Neither the gap nor nonzero local polarizations alone ensure the latter:
Propositions~\ref{prop:physicalexample} and~\ref{prop:uniqueevent}
establish their simultaneous persistence through a change of
reconstructed degree in the interacting chiral model, at a single
transversely crossed field.

The certified event admits a second reading. The determinant
$\det(\m_0,\m_1,\m_3)$ and the correlator
$\langle\widehat\chi_{\tau_*}\rangle$ of \eqref{eq:chirality} agree only
after the decoupling used in \cite{sotnikov}, which assumes nearly
aligned neighbouring moments. A face degeneracy is precisely where that
assumption fails, and the values reported in
Section~\ref{sec:results} show the two separating: the decoupled product
crosses zero at $b_*$ while the correlator does not, and the latter is
there neither small nor stationary. Its indifference to the event is not
mere insensitivity, since at $h_x=h_y=0$ it does collapse at the
crossing to the polarized state. The obstruction is therefore a property
of the local-moment reconstruction rather than of the state. In this
sense the present results and those of \cite{salvati} are complementary:
there an observable remains stable while the state is strongly altered
by measurement, whereas here the state varies smoothly and remains
gapped while the reconstructed degree changes.

The numerical scope remains finite. The seven-spin statements include
exact-arithmetic verification; the nineteen-spin results are conventional
sparse diagonalization data. Boundary dependence, a thermodynamic limit,
and skyrmion lifetimes require additional analyses. In particular, the
certified field radius is not a tunnelling barrier or a decay time, and
the degree is not a many-body topological-order invariant. Its practical
content is a reproducible bound on when a specified quantum-spin texture
can be assigned an integer charge reliably.

\section*{Conflict of interest}
The authors have no conflicts to disclose.

\section*{Data availability}
The computational supplement is archived at Zenodo,
\doi{10.5281/zenodo.22688715}.
The source code is available in the
\href{https://github.com/raulsanchezgalan/skyrmion-certify}
{project repository}.

\section*{Use of generative AI}
During the preparation of this manuscript, the authors used ChatGPT
as an exploratory and editorial aid and to assist with developing
the numerical simulations.

\appendix

\section{Exact-arithmetic verification}
\label{app:verification}

\subsection{Complete spectral enclosures}

We separate candidate generation from verification, as in validated
numerical computation \cite{rump}. For the seven-spin problem, the
matrix dimension is $128$. With the rational parameters used here,
the Hamiltonian has the form $H=A+\mathrm i(\sqrt3\,C+T)$, with
rational matrices $A,C,T$. For the four certificates of
Section~\ref{sec:results}, an integer-square-root calculation encloses
$\sqrt3$ between adjacent multiples of $2^{-90}$. Rounding each
matrix entry to a multiple of $2^{-48}$ gives a Hermitian matrix
$\widetilde H$ and a rational bound
\begin{equation}
 \norm{H-\widetilde H}\leq\epsilon_H.
 \label{eq:Hround}
\end{equation}
The bound uses the maximum entrywise error times $128$, which bounds the
Frobenius norm. A floating-point eigensolver supplies candidates only.
Their real and imaginary components and the real approximate eigenvalues
are quantized to multiples of $2^{-40}$, giving a square matrix $X$ and
an ordered diagonal matrix $\Lambda$.

The ten additional certificates used in
Proposition~\ref{prop:uniqueevent} enclose $\sqrt3$ at resolution
$2^{-110}$, round Hamiltonian entries to multiples of $2^{-60}$,
and quantize eigenvector components and approximate eigenvalues
to multiples of $2^{-52}$. The verification argument below is the
same for both sets. The exact parameters for every certificate are
recorded in \texttt{data/manifest.json} within the supplement.

All entries of $X^*X-I$ and $\widetilde H X-X\Lambda$ are then computed
by integer matrix multiplication with their exact powers-of-two
denominators. Integer-square-root upper bounds give rational numbers
\begin{equation}
 \delta\geq\norm{X^*X-I}_F,\quad
 r_X\geq\norm{\widetilde H X-X\Lambda}_F,\quad
 M_H\geq\norm{\widetilde H}_F,
 \qquad \delta<1.
 \label{eq:enclosureinputs}
\end{equation}
Let $U$ be the unitary polar factor of $X$. Each singular value
$\sigma$ of $X$ satisfies $\abs{\sigma^2-1}\leq\delta$, so
$\norm{U-X}\leq\delta$. Therefore
\begin{equation}
 \norm{H-U\Lambda U^*}
 \leq\epsilon_H+r_X+(M_H+\norm\Lambda)\delta
 =:\epsilon_{\rm spec}.
 \label{eq:spectralenclosure}
\end{equation}
Indeed, subtract $U\Lambda$ from $\widetilde H U$, insert $X$, and
use \eqref{eq:enclosureinputs}. The min--max principle now encloses
\emph{every ordered eigenvalue}:
\begin{equation}
 E_j(H)\in[\Lambda_j-\epsilon_{\rm spec},
           \Lambda_j+\epsilon_{\rm spec}].
 \label{eq:alllevels}
\end{equation}
This checks ground-state ordering, rather than merely the residuals of
a few candidate eigenpairs.

The first column $x$ gives $\phi=x/\norm x$. Its residual at
$\lambda=\Lambda_0$ is bounded by its exact integer residual divided
by a rational lower bound for $\norm x$, plus $\epsilon_H$.
Taking $b=\Lambda_1-\epsilon_{\rm spec}$ in
Corollary~\ref{prop:residual} yields a rational upper bound $\eta$
for the ground-state trace error. The four certificates of
Section~\ref{sec:results} have $\epsilon_{\rm spec}<10^{-8}J$ and
$\eta<10^{-9}$; the ten certificates of
Proposition~\ref{prop:uniqueevent}, at the finer quantization above,
have $\epsilon_{\rm spec}<2\times10^{-12}J$ and
$\eta<3\times10^{-13}$.

\subsection{Moment bounds and an exact degree count}

Although $\phi$ need not have rational components, its moments are
rational: each is a quadratic expression in $x$ divided by $x^*x$.
Candidate separating vectors from \eqref{eq:dual} are contracted
slightly, quantized to multiples of $2^{-40}$, and checked exactly for
$\abs v\leq1$ and, when necessary, $v_z\geq0$. Their rational dot
products with the candidate moments give a lower bound for
$\Gamma(\phi)$. Subtracting $\eta$ gives the reported bound for the
exact ground state. These separating vectors are stored with the
certificates, and the independent replay of
Appendix~\ref{app:computation} reads them and checks their feasibility
directly, so reproducing a certificate never requires recomputing an
optimizer.
Likewise, if $\ell_i$ is a rational lower bound for $\abs{\m_i(\phi)}$,
the trace-error certificate gives, for the spin-$1/2$ system,
\[
 \abs{\m_i(\psi_0)}\geq\ell_i-\eta,\qquad
 p_i(\psi_0)\geq2(\ell_i-\eta),
\]
so that $p_{\min}(\psi_0)\geq2(\min_i\ell_i-\eta)$.

The integer degree is verified independently of floating-point solid
angles. Choose the ray $u=(137,223,317)$ and form the rational face
matrix $M_\tau=(\m_i\ \m_j\ \m_k)$. When $\det M_\tau\ne0$, the
normalized ray lies in the interior of the radial face exactly when
every component of $M_\tau^{-1}u$ is positive. Its contribution to
the degree is $\operatorname{sgn}\det M_\tau$. Cramer's rule reduces
these tests to integer determinant signs after clearing denominators.
The code verifies that no tested preimage lies on an edge and that the
ray avoids the image of every rank-deficient face. The signed count is
therefore an exact degree of the candidate texture. Corollary
\ref{prop:residual} transfers it to the exact ground state.

The supplement retains the quantized candidates and the rational
bounds above. Every acceptance decision uses Python integers and
fractions, so the conclusions do not depend on rounding properties of
the eigensolver. The verification still assumes that the stated
Hamiltonian and mesh encode the intended finite physical model.
\section{Numerical conventions and reproducibility}
\label{app:computation}

\subsection{Assembly and eigensolvers}

Site $i$ is bit $i$ of the computational basis, with bit zero denoting
spin up. Quantum sites are ordered first by hexagonal shell and then
lexicographically in $(m,n)$. The seven-spin Hamiltonian is also
assembled independently from tensor products of Pauli matrices and
compared with the bit-transition implementation. Mesh checks verify
oppositely oriented incidences at every edge, Euler characteristic two,
and the elementary-triangle inventory: the complete disks contain $24$
and $54$ elementary triangles with $12$ and $18$ boundary edges for
$N=7$ and $N=19$, hence $36$ and $72$ faces after capping. The inventory
is the binding check, since the two enumerations of
Appendix~\ref{app:meshrepair} both give $20$ vertices, $54$ edges and
$36$ faces for $N=7$, and therefore agree on the first two. The exact-arithmetic matrix
provides a further independent check of the seven-spin operator
coefficients.

Seven-spin field scans use complete-Hilbert-space dense Hermitian
diagonalization. The nineteen-spin dimension is $524288$; a sparse
Hermitian eigensolver computes the four lowest states, with random
components included in its starting vector to avoid restriction to an
invariant polarized subspace. Independent starts check selected
nineteen-spin results. The largest reported eigenpair residual on the
nineteen-spin scan is below $4\times10^{-11}J$. These larger-system
values are not supplied with exact spectral-ordering enclosures.

\subsection{Two enumerations of the reconstruction disk}
\label{app:meshrepair}

Section~\ref{sec:model} uses the complete disk of elementary triangles.
Enumerating those triangles from anchors inside the disk alone, as in
the first release of the supplement, omits two boundary triangles for
$N=7$ and three for $N=19$, because a downward-pointing triangle can lie
inside the disk when its anchor does not. We show that the two
enumerations give the same admissibility, the same radii and the same
degree, so no reported quantity depends on the choice.

Each omitted triangle has one quantum vertex and two prescribed exterior
vertices. Restoring it replaces two nonconstant cap faces in the affected
patch by one disk face and one constant cap face. Writing $m$ for the
quantum moment and $c=\widehat{\bm z}/2$, each affected nonconstant face
has
\[
 \conv\{m,c,c\}=\conv\{m,c\},\qquad
 P=m+\cone\{c\}.
\]
Thus the complete enumeration only changes the multiplicity of an
identical fixed-exterior constraint and adds constant faces;
admissibility and $\Gamma$ are unchanged. It also preserves $\gamma$: a
constant face has distance $\abs c=1/2$, which cannot lower the global
radius because $\abs m\leq1/2$ at every quantum vertex.
Whenever admissible, each affected nonconstant face maps into a
single great-circle arc and has zero spherical area. Every face
that can contribute nonzero area is unchanged, so the degree is
preserved. The Hamiltonian does not refer to the reconstruction mesh at
all; its eigenstates, gap, and the resulting field certificates are
therefore unaffected.

The complete disks contain $24$ and $54$ elementary triangles for
$N=7$ and $N=19$, respectively, with $12$ and $18$ cap faces.
Both statements are checked in the supplement: the stored certificates
reproduce their recorded rational bounds exactly when the moment
geometry is recomputed on the complete disk, and the independent
verifier constructs both enumerations, confirms that their nonconstant
face constraints coincide, and evaluates the endpoint degrees on each.

\subsection{Geometry and field refinement}

The face geometry is evaluated by vertex, edge and interior active-set
calculations. For faces adjoining the fixed exterior, the relevant
polyhedron is the convex hull of the variable moments plus the upward
ray. For the transverse-field example, the saved scan covers
$B/J\in[0.60,0.70]$ in steps of $0.0005$. The adjacent samples with
different charges are $0.6480$ and $0.6485$; both identify $(0,1,3)$
as the face of smallest radius. The script
\texttt{sensitivity\_analysis.py} refines the zero of
$\det(\m_0,\m_1,\m_3)$ on this bracket using Brent's method with
absolute field tolerance $5\times10^{-15}J$, and records the endpoint
determinants and the positive barycentric coefficients at the root.
The event calculation in \texttt{run\_analysis.py} uses the wider
bracket $[0.648,0.649]J$ and agrees at the displayed precision. Neither
solver tolerance is an error certificate for the true root: the
existence proof uses the exact endpoint charges and uniform interval
bounds instead. The narrower bracket and uniqueness in
Proposition~\ref{prop:uniqueevent} follow from the additional exact
determinant enclosures and the derivative estimate \eqref{eq:secant}.

\subsection{Sensitivity of the nineteen-spin values}
\label{app:sensitivity}

At $B/J=0,0.4,1.4,2$, additional independent random starts request six
eigenpairs with tolerance $2\times10^{-13}$, compared with four and
$2\times10^{-12}$ in the main scan. The maximum discrepancies are
$1.4\times10^{-14}$ in the Euclidean norm of any local moment,
$5\times10^{-13}J$ in the gap and $2\times10^{-14}J$ in $R_B$;
all four charges agree. These are empirical comparisons between solver
runs, not bounds on the unknown exact errors.

To quantify sensitivity separately, suppose the computed data
$(\widehat\m,\widehat\Delta)$ have errors at most
$\epsilon_m$ in the maximum local-moment norm and
$\epsilon_\Delta$ in the gap, with
$\widehat\Gamma=\Gamma(\widehat\m)$. Define
$G_-=\widehat\Gamma-\epsilon_m$,
$G_+=\min\{1/2,\widehat\Gamma+\epsilon_m\}$ and
$D_\pm=\widehat\Delta\pm\epsilon_\Delta$, assuming the lower
endpoints are positive. Lipschitz continuity of $\Gamma$ and
monotonicity of \eqref{eq:fieldradius} give the conditional bounds
\begin{equation}
 \Gamma\geq G_-,\qquad
 p_{\min}\geq\widehat p_{\min}-2\epsilon_m,\qquad
 \frac{2D_-G_-}{19(1+G_-)}\leq R_B
 \leq\frac{2D_+G_+}{19(1+G_+)}.
 \label{eq:sensitivity}
\end{equation}
For the illustrative budgets $\epsilon_m=10^{-6}$ and
$\epsilon_\Delta=10^{-6}J$, the interval for $R_B/J$ is contained
in $[0.03202356,0.03202375]$ at $B=0$ and
$[0.00914571,0.00914581]$ at $B=1.4J$. All sampled textures have
$\widehat\Gamma>0.3428$, well above the moment budget. These round
budgets are chosen to display sensitivity and are not estimated
confidence limits or verified error bounds. In particular, this
calculation assumes correct ground-state identification and cannot
exclude a missed lower eigenvalue. The supplement records the full
conditional envelopes and the independent-run comparisons.

\subsection{Replaying the computations}

The supplement records software versions and solver settings. Its
notebook recreates the figures from the saved data and reruns the
seven-spin analysis, the exact certificates, the field refinement and
the sensitivity envelopes, exposing the more expensive nineteen-spin
scan and its independent reruns as explicit options.

Two independent replays are provided, neither of which reruns an
eigensolver or regenerates separating vectors. Running
\texttt{python check\_saved\_certificates.py} recomputes the moment
geometry from the model code and checks the four reference certificates
against their recorded rational bounds. Running
\texttt{python replay\_certificates.py} checks all fourteen stored
candidates together with the rational inequalities for the sharper
reference radius and for the unique-event certificate; it reconstructs
the Hamiltonian from tensor-product Pauli operators and does not import
the model code, so it shares no implementation with the analysis it
checks.

\end{document}